\newtheorem{theorem}{Theorem}
\newtheorem{corollary}{Corollary}
\newtheorem{lemma}{Lemma}
\newtheorem{remark}{Remark}
\newtheorem{claim}{Claim}
\newtheorem{example}{Example}
\newcommand{\popcrit}{{\sc Crit-fPRI}}
\newcommand{\lev}{\mathrm{lev}}
\newcommand{\C}{\mathcal{C}}
\newenvironment{claimproof}{\par\noindent\underline{Proof:}}{\leavevmode\unskip\penalty9999 \hbox{}\nobreak\hfill\quad\hbox{$\blacksquare$}}
\newcommand{\mplus}{(M-N)^+}
\newcommand{\nplus}{(N-M)^+}
\newcommand{\popmax}{{\sc MaxW-fPRI}}
\newcommand{\maxpri}{{\sc Max-fPRI}}
\newcommand{\w}{\omega}
\newcommand{\srti}{{\sc Max-fSRTI}}
\newcommand{\smti}{\textsc{Max-SMTI}}
\newcommand{\vote}{\mathrm{vote}}
\newcommand{\cgblocks}{$\gamma$-blocks}
\newcommand{\cgstab}{$\gamma$-stable}
\newcommand{\usrti}{{\sc Max-$\gamma$-fSRTI}}
\newcommand{\cgblocking}{$\gamma$-blocking}
\newcommand{\pbDef}[3]{
     \begin{center}
     \colorbox{lightgray}{\begin{minipage}{0.95\linewidth}%
    \textsc{#1}%\\[0.2ex]  
     \begin{itemize}[leftmargin=1.1cm]      
       \item[\textbf{In:}]  #2%\\[0.2ex]
       \item[\textbf{Out:}]  #3
     \end{itemize}
    \end{minipage}
    }
    \end{center}
  
}
\title{Extending Stable and Popular Matching Algorithms from Bipartite to Arbitrary Instances}
\author{
  Gergely Cs\'{a}ji$^{1,2}$\\[0.5em]
  \small $^1$Institute of Economics, Centre for Economic and Regional Studies (ELTE KRTK KTI), Hungary\\
  \small $^2$Department of Operations Research, Eötvös Loránd University (ELTE), Hungary
}
\begin{document}
\begin{titlepage}
	\maketitle
\begin{abstract}
We consider stable and popular matching problems in arbitrary graphs, which are referred to as stable roommates instances. 
We extend the $3/2$-approximation algorithm for the maximum size weakly stable matching problem to the roommates case, which solves a more than 20 year old open question of Irving and Manlove about the approximability of maximum size weakly stable matchings in roommates instances with ties \cite{manlove2002hard} and has nice applications for the problem of matching residents to hospitals in the presence of couples.
We also extend the algorithm that finds a maximum size popular matching in bipartite graphs in the case of strict preferences and the algorithm to find a popular matching among maximum weight matchings. 

While previous attempts to extend the idea of promoting the agents or duplicating the edges from bipartite instances to arbitrary ones failed, these results show that with the help of a simple observation, we can indeed bridge the gap and extend these algorithms.
\end{abstract}
\end{titlepage}

\section{Introduction}
Preference-based matching markets are a well-studied area in computer science, mathematics, and economics, with many practical applications. Research in this field began in 1962 
with the influential work of Gale and Shapley, who introduced the stable marriage problem and showed that a stable matching always exists and can be found in linear time \cite{gale1962college}. This discovery led to a large body of research on matching problems and their applications. For an overview, see Manlove \cite{manlove2013algorithmics}. Today, matching theory is used in many areas, including resident allocation, university admissions, kidney exchanges, job markets, and more, making it important for various decision-making tasks

A central question related to stable matchings is about finding a maximum size stable matching in a given instance with weak preferences (meaning ties are allowed). For bipartite instances, the problem is called \smti, which has been show to admit a 3/2-approximation algorithm by McDermid in 2009 \cite{Mcdermid09}. However, the question of approximability remained open for arbitrary instances ever since Irving and Manlove asked it in their seminal paper more than 20 years ago \cite{irving2002stable}. The approximablity of maximum size stable matchings in arbitrary (also called roommmates) instances was also posed as an open question at \cite{egres}. As our main result, we resolve this long-standing open question and show that the problem admits a 3/2-approximation algorithm just as in the bipartite special case. Somewhat surprisingly, this result has a nice application for the allocation of residents to hospitals in the presence of couples, which is also a central problem in the field, with applications in many country's national resident matching programs, such as the NRMP in the US \cite{nrmp}. 

Apart from stability, another central notion getting more and more attention recently is popularity, which was introduced by Gardenfors \cite{gardenfors1975match}. Intuitively, a popular matching is one that does not lose in a head-to-head election against any other matching. The advantages of popularity are that (i) it allows larger matchings than stable ones, while still preserving a nice global form of stability and (ii) it can naturally be made compatible with maximum size or maximum weight matchings, by restricting the candidate matchings. 

Using our central observation, we extend some previous algorithms for popular matchings to roommates instances. We give an algorithm to find a maximum size popular fractional-matching and also an algorithm that finds a popular solution among the maximum weight fractional-matchings.

\subsection{Related Work}\label{sec:related}
The stable marriage model - where the goal is to match two classes of agents in a way such that there
are no blocking edges, i.e. pairs of agents who mutually prefer each other to their partners - was introduced by Gale and Shapley \cite{gale1962college}. However, this work only considered complete and strict preferences.
The problem was first extended for the case where ties and incomplete preferences are allowed by Iwama et al.~\cite{IMMM99}, who showed the NP-hardness of \textsc{Max-SMTI}, which is the problem of finding a maximum size weakly stable matching in such an instance, where weakly stable refers to the fact that both agents need a strict improvement for a pair to block. Since then, several papers made progress to improve the approximability of the problem, e.g. Iwama et al. \cite{IMY07, IMY08} and Király \cite{Kiraly11}. Currently, the best ratio is $\frac{3}{2}$ by a polynomial time algorithm of McDermid \cite{Mcdermid09}, where the same ratio is attained by linear-time algorithms of Pauluch \cite{paluch2011faster, Paluch14} and Király \cite{kiraly2012linear, Kiraly13}.
%The current best lower bound is $\frac{33}{29}$ \cite{Yanagisawa07}. 
%Moreover, the integrality gap of a natural IP formulation for {\sc max-smti} is shown to be at least $\frac{3}{2}-o(1)$ in \cite{IMY14}, which rules out the possibility of further improvement by techniques such as rounding and primal-dual algorithms.
This $\frac{3}{2}$-approximation has been extended even to very general cases, including two-sided matroid constraints \cite{csaji2023approximation}, $\gamma$-stability \cite{csaji2023simple}, where we are given $\gamma_v^e$ numbers for each edge $e$ and vertex $v$ such that these numbers describe how much improvement $v$ needs to consider $e$ a blocking edge, and critical relaxed stability \cite{critical-ties-approx}, which is a generalization of \smti, where there is a set of critical agents, among whom we must match as much as possible and stability is relaxed in an appropriate way.

As for the inapproximability of \textsc{Max-SMTI}, Halldórssson et al.  \cite{halldorsson2002inapproximability} showed that it is NP-hard to approximate it within some constant factor. Yanagisawa \cite{yanagisawa2007approximation} and Halldórsson et al. \cite{yanigasawa2003improved} showed that assuming the Unique Games Conjecture, there is no $\frac{4}{3}-\varepsilon$-approximation for any $\varepsilon >0$. In a recent work, Dudycz et al. \cite{smti1.5inapprox} proved that assuming the strong Unique Games Conjecture or the Small Set Expansion Hypothesis, there cannot even be a $\frac{3}{2}-\varepsilon$-approximation algorithm for \textsc{Max-SMTI}.

The Stable Roommates problem with ties and incomplete preferences was studied by Irving and Manlove \cite{irving2002stable}. They showed that finding super stable matchings (where even a weak improvement is enough for a pair to block) can be done in polynomial time. They posed as an open question whether the maximum size of a weakly stable matching can be approximated better than 2 or not. Of course, as deciding if a stable roommates instance with ties admits a weakly stable matching is NP-hard \cite{irving2002stable}, the question can be interpreted in the following sense. By a result of Tan \cite{tan1991necessary}, we know that half-integral weakly stable matchings (also called weakly stable partitions) always exist and can be found in polynomial time, even in the case of multigraphs \cite{cechlarova2005stable}. Therefore, the question of whether we can approximate the maximum size of a weakly stable half-matching (or a weakly stable fractional-matching) better than 2 is still a very relevant open question. %Here, a 2-approximation is trivial, as any weakly stable matching is a 2-approximation, which has already been observed by Irving and Manlove in their paper \cite{irving2002stable}.
We show that both these problems admit a $\frac{3}{2}$-approximation algorithm, which is best possible under the strong Unique Games Conjecture, given the inapproximabilty results for the bipartite case.

Popular matchings have been studied by many papers recently. Intuitively, a matching is popular, if for any other matching, the number of agents who improve is at most the number of agents who get worse. It has been known since Gardenfors' seminal paper in 1975 \cite{gardenfors1975match} that stable matchings are popular, and hence popular matchings always exist in bipartite instances. Kavitha \cite{kavitha2014size} gave algorithms to find a maximum size popular matching, and a matching that is popular among the maximum size matchings. The ideas of her algorithms were similar to Király's, that is, they used Gale-Shapley type proposal-rejection algorithms, where agents could have multiple levels.
Finding a popular matching among the maximum weight matchings has also a similar algorithm as shown in Csáji et al. \cite{csaji2024popular}.

Sadly, if the preferences of the agents may contain ties, a popular matching may not exist and finding a popular matching or reporting that none exist is NP-hard \cite{cseh2017popularNPh}. 
In the case of arbitrary graphs, i.e. stable roommates instances, finding a popular matching was shown to be NP-hard even with strict preferences by Gupta et al. \cite{gupta2021popular} and Faenza at al. \cite{faenza2019popular} independently. 

To cope with this computational obstacle, Huang and Kavitha \cite{huang2017popularity} studied popular mixed-matchings in the marriage and roommmates settings. They showed that a popular mixed-matching always exists and gave an algorithm to find a maximum weight popular mixed-matching in polynomial-time by using an LP-formulation for the problem. This notion of popular mixed-matchings were also studied by Brandt and Bullinger \cite{brandt2022finding}. We introduce a new version of popularity for fractional-matchings, which we call popular fractional-matchings. We show that popular mixed-matchings may not be popular fractional-matchings, but our algorithm can find a popular fractional-matching that has maximum size among both the popular fractional- and popular mixed-matchings.

\subsection{Our Contributions}
In this paper we extend the $\frac{3}{2}$-approximation algorithm for \smti\ for roommates instances. A nice application for this result (somewhat surprisingly) arises from the problem of assigning residents to hospitals in the presence of couples. This is a notoriously difficult problem algorithmically, which is NP-hard even in extremely restricted cases \cite{biro2014hospitals}. However, it was shown very recently by Csáji et al. \cite{csaji2023couples}, that if we pose some natural restrictions on the couple's preferences, then we can have polynomial time algorithms that find stable matchings by modifying the capacities by at most 1 in the process (which leads to a so-called near-feasible stable matching). The novel idea of the algorithm is to reduce the problem to the stable roommates problem with appropriate gadgets and round off a weakly stable half-matching in the obtained instance. Therefore, given that the hospitals and the residents may have ties in their rankings, using algorithms that find larger weakly stable half-matchings in stable roommates instances can be utilized to find near-feasible stable solutions that match more residents, which is a very desirable feature. 

We also extend the algorithms for finding maximum size popular matchings and popular matchings among the maximum weight matchings (in the case of strict preferences). Since popular matchings may not exist in the roommates model, and even deciding their existence is NP-hard, we propose a natural definition of popularity among fractional-matchings. This is somewhat similar to a previous notion of popular mixed-matchings \cite{huang2017popularity}, however, we show that a mixed-matching may not be popular with respect to our notion. We show that there always exists a maximum size popular fractional-matching that is half-integral. We also give an algorithm that finds such a matching. This happens to be also a popular mixed-matching which has maximum size among the popular mixed-matchings. Finally, we give an algorithm that finds a half-integral matching that is popular among the maximum weight fractional-matchings. 

Our results are based on a simple, but crucial observation. If one translates the algorithms of Király \cite{kiraly2012linear} and Kavitha \cite{kavitha2014size} that have multiple possible levels for each agent such that they make parallel copies of different types of each edge instead that the two sides rank in an (almost) opposite order, then it seems essential that we can have a partition $U\cup W$ of the vertices such that each edge has one vertex from both. However, it turns out that we only need that for each edge, its parallel copies are ranked in an (almost) opposite order by its two vertices, we do not need to have copy types that are globally ranked in an (almost) opposite way by the two sides. This crucial observation helps us to overcome the difficulties of extending these algorithms for roommates instances.

\subsection{Paper Structure}
We start by defining the problems we study in more detail in Section \ref{sec:prelim}. Then, we describe the meta-algorithm in Section \ref{sec:alg}, which serves as a basis for all three of our algorithms. The approximation algorithm for \srti\ is described in Section \ref{sec:maxsrti}, the algorithm for finding a maximum size popular fractional-matching in Section \ref{sec:maxpop} and the algorithm for finding a popular half-integral matching among the maximum weight fractional-matchings in Section \ref{sec:popmax}. 
Finally, we conclude in Section \ref{sec:conc}.

\section{Preliminaries}\label{sec:prelim}

We investigate matching markets, where the set of agents with the possible set of contracts is given by an arbitrary graph $G=(V;E)$, where vertices correspond to agents and edges correspond to possible, mutually acceptable contracts. We allow the graph $G$ to have parallel edges, i.e. multiple contracts between the same two agents. For each agent $v\in V$, let $E(v)$ denote the edges that are incident to $v$.

We say that a function $M:E\to \mathbb{R}_{\ge 0}$ is a \textit{fractional-matching}, if $\sum_{e\in E(v)}M(e)\le 1$ for each $v\in V$. We say that $M$ is a \textit{half-matching}, if additionally $M(e)\in \{ 0,\frac{1}{2},1\}$ for each $e\in E$. We say that $v$ is \textit{saturated} by $M$, if it holds that $\sum_{e\in E(v)}M(e)= 1$, otherwise $v$ is \textit{unsaturated}. Similarly, an edge $e$ is \textit{saturated} by $M$, if $M(e)=1$, otherwise it is unsaturated. For a vertex $v$, let $E_M(v)$ denote the edges with positive $M$-value incident to $v$, that is $E_M(v) = \{ e\in E(v)\mid M(e) > 0\}$. 

For a number $x$, let $x^+=\max \{ 0,x\}$. 

\subsection{Weakly Stable and $\gamma$-stable Matchings}
We assume that for each agent $v$, there is a \textit{preference function} $p_v:E(v) \to \mathbb{R}_{\ge 0}$, which defines a weak ranking $\succeq_v$ over the incident edges of $v$. We emphasize that we assume that the agents (vertices) rank their incident edges instead of the other agents, because we allow parallel edges in our model, representing multiple types of contracts between two given agents. We also assume that $p_v(\emptyset )\le 0$, which denotes that an agent always weakly prefers to be matched to any acceptable partners rather than being unmatched. Such an instance is called a \textit{stable roommates instance with ties and incomplete preferences}. If the $\succeq_v$ induced preferences are strict, then we refer to it as a \textit{stable roommates instance.} %Furthermore, for each vertex $v$, there is also an integral capacity $q(v)\in \mathbb{N}$.

For a vertex $v$ and a fractional-matching $M$, we define $p_v(M):=\min_{e\in E_M(v)}p_v(e)$, if $v$ is saturated, otherwise $p_v(M):=p_v(\emptyset )$. 

%By abuse of notation, we sometimes refer to the partner of $v$ in $M$ as $M(v)$ too. 
We say that an edge $e=(u,v)$ \textit{blocks} a fractional-matching $M$, if $M(e)<1$ and (i) $p_u(e)>p_u(M)$ and (ii) $p_v(e)>p_v(M)$, that is, both are either unsaturated or have a worse partner with positive weight in $M$. 

A fractional-matching $M$ is called \textit{weakly stable} or just \textit{stable}, if there is no blocking edge to $M$. The problem of finding a maximum size weakly stable matching in a bipartite graph with weak preferences, called \smti\ is a well studied problem that is NP-hard even to approximate within some constant, but admits a linear time $\frac{3}{2}$-approximation \cite{Kiraly13}. Here, we consider the problem of finding a maximum-size weakly stable fractional-matching in arbitrary graphs, which we call {\sc maximum size stable roommates problem with ties and incomplete preferences}, or \srti\ for short.

Instead of weak stability, we define a more general notion, mirroring $\gamma$-stability for bipartite graphs \cite{csaji2023simple}. Suppose that for each edge $e=(u,v)$, there are two numbers $\gamma_e^u>0$ and $\gamma_e^v>0$ given. We say that a fractional-matching $M$ is \textit{$\gamma$-min stable}, if there is no blocking edge $e=(u,v)$ such that $p_u(e)-p_u(M)\ge \gamma_e^u$ and $p_v(e)-p_v(M)\ge \gamma_e^v$ holds. Similarly, a fractional-matching is \textit{$\gamma$-max stable}, if there is no blocking edge $e=(u,v)$ such that $p_u(e)-p_u(M)\ge \gamma_e^u$ or $p_v(e)-p_v(M)\ge \gamma_e^v$ holds. %The case of $\Delta$-stabilites with a set $F\subseteq E$ of free edges corresponds to the very special case, when $\gamma_e^v=\infty$, if $e\in F$ and $\gamma_e^v=\Delta$ otherwise. Hence, in our model, we allow very different types of conditions for each edge to block, independently from each other, which can incorporate many other special properties of certain applications. For example, if one thinks about job markets, and assumes that the underlying preferences are in correspondence with the salaries of the positions, there may be many other aspects of a workplace that make it desirable. Hence, for each agent and each different company and position, the increase in the salary that would make a job offer good enough for the applicant to switch, might be different.  Similary, depending on a company's existing employees and a new agent, it may depend on the specific skills of a new agent that are relevant for the company, how large of a salary within a contract would be worth it for the company to switch from an existing employee to the new one in a blocking pair. 

%Putting it all together, we define a general model, which incorporates all of the previously discussed ones. Suppose we are given a set $C\subseteq U\cup W$ of critical vertices, a set $E_c\subseteq E$ of critical edges and $\gamma_e^v$ values for each pair $(e,v)\in E\times (U\cup W)$ such that $v\in e$. We say that a matching $M$ is \cminstab, if $M$ is critical and there is no blocking edge $e=(u,w)$, such that $M\setminus \{ (u,M(u)),(M(w),w)\} \cup \{ e\}$ is still critical, and $\min \{ p_u(e)-p_u(M(u))-\gamma_e^u,p_w(e)-p_w(M(w))-\gamma_e^w\} \ge 0$ both hold. Otherwise, we call such a blocking edge a \textit{$c\gamma$-min blocking edge}. Clearly, this generalizes all three concepts: $\Delta$-min stability, free edges and critical vertices. Similarly we can define \cmaxstab\  matchings by replacing $\min \{ p_u(e)-p_u(M(u))-\gamma_e^u,p_w(e)-p_w(M(w))-\gamma_e^w\} \ge 0$ with $\max \{ p_u(e)-p_u(M(u))-\gamma_e^u,p_w(e)-p_w(M(w))-\gamma_e^w\} \ge 0$.

In order to be able to solve both problems with the same algorithm, we go one step further and consider a common generalization of the above two problems. Instead of one $\gamma_e^u$ value, we have values $0<\gamma_e^u < \delta_e^u$ for each vertex-edge pair. We say that an edge $e=(u,v)$ \textit{\cgblocks} a fractional-matching $M$, if either $\min \{ p_u(e)-p_u(M)- \gamma_e^u, p_v(e)-p_v(M)- \delta_e^v\} \ge 0$ or $\min \{ p_u(e)-p_u(M)- \delta_e^u,p_v(e)-p_v(M)-\gamma_e^v\} \ge 0$ holds. We say that a matching $M$ is \textit{\cgstab}, if no edge \cgblocks\ $M$.

If the $\gamma_e^v$ values are sufficiently small, then this corresponds to $\gamma$-max stability, and when the $\gamma_e^v$ and $\delta_e^v$ values are sufficiently close, then it corresponds to $\gamma$-min stability. Furthermore, if both the $\gamma_e^v$ and $\delta_e^v$ are sufficiently small, then we get back weak stability. 

Now we define the optimization problem we investigate, called \textsc{Maximum  $\gamma$-stable fractional-matching with ties and incomplete preferences} abbreviated as \usrti\ for short. 
\pbDef{\usrti}{
A graph $G=(V;E)$, $p_v:E(v)\to \mathbb{R}_{\ge 0}$ preference valuations for each $v\in V$, numbers $0<\gamma_e^v<\delta_e^v$ for each pair $(e,v)\in E\times V$ such that $v\in e$.
}{
A maximum size \cgstab\ fractional-matching $M$}

The special case of finding a maximum size weakly stable fractional-matching is called \srti.

%The main result of the paper is a simple $\frac{3}{2}$-approximation algorithm for \usrti.

\subsection{Popular Matchings}

For our popular matching related problems, we assume that the $p_v(e)$ values are different for each vertex $v\in V$, so they induce a strict preference order $\succ_v$ over its adjacent edges. Also, we assume $e\succ_v \emptyset$ for any adjacent edge $e$ to $v$.

Let $e,f$ be two elements of $E\cup \{ \emptyset \}$. An agent $v$ can compare these two elements and cast a vote, based on which one he prefers more according to $\succ_v$. Hence, we define $\vote_v(e,f)=+1$, if $e\succ_v f$, $\vote_v(e,f)=0$, if $e=f$ and $\vote_v(e,f)=-1$, if $f\succ_v e$. 

For integral matchings, popularity is defined as follows. $M$ is \textit{popular}, if $$\Delta (M,N)=\sum_{v\in V}\vote_v(M(v),N(v))\ge 0 $$ for any matching $N$, where $M(v)$ and $N(v)$ denote the edge $v$ receives in $M$ or $N$ respectively, if there is any, otherwise they are defined to be $\emptyset$. Hence, a matching $M$ is popular, if it does not lose in a head-to-head comparison to any other matching. 

We extend this definition in a natural way to fractional-matchings, which mirrors the definition of popularity for the case of many-to-many matching instances \cite{brandl2016popular}. 

Let $M,N $ be two fractional-matchings. We denote the fractional-matching defined by $\max \{M(e)-N(e),0\}$ for $e\in E$ as $(M-N)^+$. 
Agent $v\in V$ now compares the differences $\mplus$ and $\nplus$ by pairing the edges of $E_{\mplus}(v)$ with $E_{\nplus}(v)$ with some appropriate weights. This is done with the help of a function $\phi_v \colon \big( E(v) \cup  \{ \emptyset \} \big) \times \big( E (v) \cup  \{ \emptyset \} \big) \to \mathbb{R}_{\ge 0}$, such that 
\begin{enumerate}
    \item $\sum_{f\in E (v)\cup \{ \emptyset \} }\phi_v (e,f) =\mplus (e)$ for each $e\in E (v)$,
    \item $\sum_{f\in  E (v)\cup \{ \emptyset \}  }\phi_v (f,e) =\nplus (e)$ for each $e\in E(v)$,
    \item $\sum_{f \in  E (v)\cup \{ \emptyset \}}\phi_v (f,\emptyset) =(\sum_{e\in E(v)}M(e)-N(e))^+$,
    \item $\sum_{f \in  E (v)\cup \{ \emptyset \}}\phi_v (\emptyset,f)= (\sum_{e\in E(v)}N(e)-M(e))^+$.
\end{enumerate}

The function $\phi_v$ is called a \textit{feasible pairing for} $v$ if it satisfies these condition. 

It follows also that $\phi_v(e,e)=0$ for any $e\in E (v) \cup \{ \emptyset \}$, as either $\mplus(e)=0$ or $\nplus(e)=0$ and also either $(\sum_{e\in E(v)}M(e)-N(e))^+=0$ or $(\sum_{e\in E(v)}N(e)-M(e))^+=0$.

 We say that $\phi = \{ \phi_v \mid v\in V \}$ is a \textit{feasible pairing}, if $\phi_v$ is a feasible pairing for $v$, for each $v\in V$.

For a feasible pairing $\phi$, we define $$\vote_v(M,N,\phi_v) = \sum_{(e,f)\in (E(v)\cup \{ \emptyset \})^2}  \phi_v(e,f) \cdot \vote_v(e,f).$$

Finally, we let $$\Delta (M,N,\phi )=\sum_{v\in V} \vote_v(M,N,\phi_v)$$ and $$\Delta (M,N) = \min \big\{ \Delta (M,N,\phi) \bigm| \text{$\phi$ is a feasible pairing} \big\}$$.

We say that an fractional-matching $M$ is \textit{popular}, if $\Delta (M,N )\ge 0$ for any fractional-matching $N$, i.e. if no other fractional-matching can beat $M$ in a head-to-head election with respect to any feasible pairing.

We study the following optimization problem of finding a maximum size popular fractional-matching.

\pbDef{\maxpri}{
A graph $G=(V;E)$ with strict preference orders $\succ_v$ for each $v\in V$.
}{
A maximum size popular fractional-matching $M$}

\subsection{Popular Mixed-Matchings and a Common Generalization}

Let $M$ be a fractional-matching. Then, by a result of Balinski \cite{balinski1965integer} about the half-integrality of the matching polytope, we have that $M=\sum_{i=1}^k\lambda_iM_i$, where $0<\lambda_i\le 1$, $\sum_{i=1}^k\lambda_i =1$ and each $M_i$ is half-integral. 

We say that the fractional-matching $M$ is a \textit{popular mixed-matching}\cite{huang2017popularity}, if it holds that for any 
fractional-matching $N$, we have that 
{\scriptsize
$$\sum\limits_{v\in V}[\sum\limits_{e,f\in E(v)}M(e)N(f)\vote_v(e,f)+\sum\limits_{e\in E(v)}M(e)(1-\sum\limits_{f\in E(v)}N(f))\vote_v(e,\emptyset ) +\sum\limits_{f\in E(v)}N(f)(1-\sum\limits_{e\in E(v)}M(e))\vote_v(\emptyset ,f)]\ge 0.$$
}

It is shown by Huang and Kavitha \cite{huang2017popularity} using the half-integrality of the matching polytope that this is equivalent to only requiring it for half-integral matchings $N$, and if the instance is bipartite, only for integral matchings $N$.

\begin{example}
    We give an example to show that a popular mixed-matching may not be a popular fractional matching. We have a bipartite instance with agents $u_1,u_2,u_3$ on one side and $w_1,w_2$ on the other. The preferences are $w_1\succ_{u_i} w_2$ for $i=1,2$, only $w_2$ for $u_3$, and $u_1\succ_{w_1} u_2$, $u_1\succ_{w_2} u_2\succ_{w_2} u_3$. Take the half-matching $M$ with $M(u_1,w_1)=M(u_1,w_2)=M(u_2,w_1)=M(u_2,w_2)=\frac{1}{2}$ and $M(e)=0$ otherwise. It is straightforward to verify that $M$ is a popular mixed-matching, by comparing it to every integral matching $N$. 
    
    However, it is not a popular fractional-matching. Take $N$ with $N(u_1,w_1)=1, N(u_2,w_2)=N(u_3,w_2)=\frac{1}{2}$ and $N(e)=0$ otherwise. Then, the feasible pairings are unique for each agent and we have that $u_1,w_1$ both have votes $-\frac{1}{2}$, $u_2,w_2$ both have votes $+\frac{1}{2}$ and $u_3$ has vote $-\frac{1}{2}$ so $\Delta (M.N)<0$. 
\end{example}

We relax the axioms of feasible pairings to allow comparisons that arise from the definition of popular mixed-matchings $-$ note that here an edge $e$ with $M(e)=N(e)=\frac{1}{2}$ may be paired with different edges, which is not allowed in a feasible pairing. 

We say that $\phi\colon \big( E(v) \cup  \{ \emptyset \} \big) \times \big( E (v) \cup  \{ \emptyset \} \big) \to \mathbb{R}_{\ge 0}$ is a \textit{sensible-pairing} between fractional-matchings $M$ and $N$, if 
\begin{enumerate}
    \item\label{sens1} $\sum_{f\in E (v)\cup \{ \emptyset \} }\phi_v (e,f) =M (e)$ for each $e\in E (v)$,
    \item\label{sens2} $\sum_{f\in  E (v)\cup \{ \emptyset \}  }\phi_v (f,e) =N (e)$ for each $e\in E(v)$,
    \item \label{sens3}$\sum_{f \in  E (v)\cup \{ \emptyset \}}\phi_v (\emptyset,f)>0$ $\Longrightarrow$  $\sum_{e\in E(v)}M(e)<1$,
    \item \label{sens4}$\sum_{f \in  E (v)\cup \{ \emptyset \}}\phi_v (f,\emptyset)>0$ $\Longrightarrow$ $\sum_{e\in E(v)}N(e)<1$,
    \item \label{sens5}$\phi_u(e,e)=\phi_v(e,e)$ for each $e=(u,v)\in E$. 
\end{enumerate}

Note that by changing the last axiom to $\phi_u(e,e,)=\phi_v(e,e)=\min \{ M(e),N(e)\}$, changing 3. back to the stronger requirement
  $\sum_{f \in  E (v)\cup \{ \emptyset \}}\phi_v (\emptyset,f)= (\sum_{e\in E(v)}N(e)-M(e))^+$, and changing 4. to $\sum_{f \in  E (v)\cup \{ \emptyset \}}\phi_v (f,\emptyset) =(\sum_{e\in E(v)}M(e)-N(e))^+$, we get back the definition for a feasible pairing. Hence, any feasible pairing is sensible. For a pairing $\phi_v$ defined by $\phi_v (e,f)=M(e)N(f)$, $\phi_v (e,\emptyset )=M(e)(1-\sum_{f\in E(v)}N(f))$, $\phi_v (\emptyset ,f)=(1-\sum_{e\in E(v)}M(e))N(f)$ for $e=(u,v)$, which is the pairing in the definition of popular mixed-matchings, we have that all axioms are clearly satisfied. 

We say that a fractional-matching $M$ is \textit{extra-popular}, if $\Delta (M,N,\phi)\ge 0$ for any fractional matching $N$ and any sensible-pairing $\phi$ between $M$ and $N$.
\begin{remark}
It immediately follows that an extra-popular fractional-matching is both a popular fractional-matching and a popular mixed-matching. 
\end{remark}

We say that a fractional-matching $M$ is \textit{barely-defendable}, if for any other fractional-matching $N$, there exists some sensible pairing $\phi$ such that $\Delta (M, N,\phi )\ge 0$. 

\begin{remark}
    We also have that both popular mixed-matchings and popular fractional-matchings are barely-defendable. 
\end{remark}

\subsection{Popular Maximum Weight Matchings and Popular Critical Matchings}

Let us suppose that we have some weight function $\w:E\to \mathbb{R}$ on the edges.
By restricting the candidate fractional-matchings to only the maximum weight fractional-matchings, we can define a new version of popularity. We say that a maximum weight fractional-matching $M$ is a \textit{popular maximum weight fractional-matching}, if we have $\Delta (M,N,\phi)\ge 0$ for any other maximum weight fractional-matching $N$ and feasible pairing $\phi$. We call the problem of finding such a matching \popmax.

\pbDef{\popmax}{
A graph $G=(V;E)$ with strict preference orders $\succ_v$ for each $v\in V$ and a weight function $\w:E\to \mathbb{R}$.
}{
A popular maximum weight fractional-matching $M$.}

In our algorithm, we reduce this problem to a slightly easier problem, where we consider critical fractional-matchings. To define these, let us suppose we are given a set $\C\subseteq V$ of critical vertices. We say that a fractional-matching is \textit{critical}, if it saturates every critical vertex, i.e. if $\sum_{e\in E(v)}M(e)=1$ for all $v\in \C$. 

Finally, we say that a critical fractional-matching is a \textit{popular critical fractional-matching}, if $\Delta (M,N,\phi)\ge 0$ for any critical fractional-matching $N$ and feasible pairing $\phi$. 

\pbDef{\popcrit}{
A graph $G=(V;E)$ with strict preference orders $\succ_v$ for each $v\in V$ and a set of critical vertices $\C \subseteq V$.
}{
A popular critical fractional-matching $M$.}

\section{The Meta-Algorithm}\label{sec:alg}

As it turns out, similarly to the bipartite case, all these problems can be solved with very similar algorithms, that have the same outline. We call this the \textit{meta-algorithm}, which we describe now.

1. Create a stable roommates instance $I'=(G',\succ')$ with strict preferences (and parallel edges - note that this is also called Stable Multiple Activities in the literature \cite{cechlarova2005stable}) by making parallel copies $(e_i)_{i\in k}$ of each edge $e$ and creating strict preferences over the created edges.  

2. Run the extension of Tan's algorithm by Cechlárová et al.\cite{cechlarova2005stable} to obtain a stable half-matching $M'$ in the new instance $I'$. 

3. Take the projection $M$ of $M'$ to $I$, which is defined as $M(e):=\sum_{i}M'(e_i)$.

The second and third steps will be the same for all three algorithms. The only difference between them is in the first step, that is, how do we create parallel edges and how do we create strict preference orders over them.

\subsection{Maximum Size $\gamma$-Stable Fractional-Matching}\label{sec:maxsrti}

We start with our $\frac{3}{2}$-approximation algorithm for \usrti.

Let $V=\{ v_1,\dots, v_n\}$ be an indexing of the vertices.

\paragraph{The fine-tuned first step.}
For each edge $e=(v_i,v_j)\in E$, $i<j$ we create four parallel edges $e^1,e^2,e^3,e^4$. 
For $v_i$, $e^1$ is considered the best copy of $e$, while $e^2$ is second best, $e^3$ is third and $e^4$ is last. For $v_j$, $e^4$ is the best copy of $e$, $e^3$ is second, $e^2$ is third and $e^1$ is last.
 
We create the strict preferences of a vertex $v$ as follows. For any edges $e,f$, the worst copy of $f$ (which may be either $f^1$ or $f^4$) is always ranked worse than any best, second or third copy of $e$. Furthermore, the second best copy of $f$ is better than the best copy of $e$ if and only if $p_v(f)\ge p_v(e)+\gamma_f^v$ and the third copy of $f$ is better than the best copy of $e$ if and only if $p_v(f)\ge p_v(e)+\delta_f^v$. Within the best and last copies, the edges are ranked according to a strict extension of $v$'s original weak preferences. 
This can be obtained in the following way. For a vertex $v\in V$ and $e=(u,v)$ let $a(e)$ be the best, $b(e)$ be the second, $c(e)$ be the third and $d(e)$ be the last copy. Set 
$p_v(a(e)) = p_v(e)$, $p_v(b(e)) = p_v(e)-\gamma_e^v$ and $p_v(c(e)) = p_v(e)-\delta_e^v$. Then, for each vertex, these new (possibly negative) preference numbers define a weak order, in which we break ties in a way such that among edges with the same value the $c$ copies are best, the $b$ copies are second and the $a$ copies are last. Finally, the worst copies are ranked strictly last in the order they are in $p_v()$, breaking the ties arbitrarily. 

We start by showing that we can restrict ourselves to only consider half-matchings.

\begin{lemma}
    There always exists a half-integral optimal solution to \usrti.
\end{lemma}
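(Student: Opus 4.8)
The plan is to show that from any optimal $\gamma$-stable fractional-matching $M^*$ we can extract a half-integral $\gamma$-stable matching of the same size. The natural tool is the half-integrality of the fractional matching polytope (Balinski), but that alone only controls feasibility and size, not $\gamma$-stability, so the argument must be done with care about how the decomposition interacts with the preference conditions.

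First I would recall that, by Balinski's theorem, the fractional matching polytope of $G$ has only half-integral vertices, and more usefully that the face of this polytope defined by fixing the saturation status of every vertex (i.e. $\sum_{e \in E(v)} M(e) = 1$ for the vertices saturated by $M^*$, and $\le 1$ otherwise) is again a half-integral polytope. So I would consider the polytope $P$ of all fractional-matchings $N$ that saturate exactly the set $S$ of vertices saturated by $M^*$ and that are supported on $E_{M^*} := \{e : M^*(e) > 0\}$. Since $M^* \in P$, $P$ is nonempty, and every vertex of $P$ is a half-matching with $\sum_e N(e) = \sum_e M^*(e)$ — the last equality because maximizing $\sum_e N(e)$ over $P$ is attained at $M^*$ (all vertices in $S$ are already saturated, and on the support the total weight is forced to equal $|S|$ minus the contributions... more precisely, $\sum_v \sum_{e \in E(v)} N(e) = 2\sum_e N(e)$, and the left side is maximized and equals its $M^*$ value since $S$ is exactly the saturated set). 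Thus picking any vertex $M$ of $P$ gives a half-matching of the same size as $M^*$, with $E_M \subseteq E_{M^*}$ and the same saturated vertex set $S$.

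The key point then is that $M$ is still $\gamma$-stable. Suppose some edge $e = (u,v)$ $\gamma$-blocks $M$; I would derive a contradiction with $\gamma$-stability of $M^*$. The relevant quantities are $p_v(M) = \min_{f \in E_M(v)} p_v(f)$ when $v$ is saturated, and $p_v(\emptyset)$ otherwise. The crucial observations are: (i) a vertex is saturated by $M$ iff it is saturated by $M^*$, so the "unsaturated" case of a block transfers verbatim; and (ii) when $v$ is saturated, $E_M(v) \subseteq E_{M^*}(v)$, hence $p_v(M) = \min_{f \in E_M(v)} p_v(f) \ge \min_{f \in E_{M^*}(v)} p_v(f) = p_v(M^*)$. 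Therefore $p_v(e) - p_v(M) \le p_v(e) - p_v(M^*)$ — wait, that is the wrong direction; I actually get $p_v(M) \ge p_v(M^*)$, so $p_v(e) - p_v(M) \le p_v(e) - p_v(M^*)$, which makes blocking $M$ \emph{harder} than blocking $M^*$... so this needs the reverse. Let me re-examine: $E_M(v) \subseteq E_{M^*}(v)$ means the minimum is taken over a smaller set, so $p_v(M) \ge p_v(M^*)$, meaning $M$ gives $v$ a weakly better worst partner, so $e$ blocking $M$ means $p_v(e) > p_v(M) \ge p_v(M^*)$, hence $e$ also blocks $M^*$ (for the $u,v$ threshold conditions, the $\gamma$/$\delta$ inequalities $p_v(e) - p_v(M) \ge \gamma_e^v$ become only easier to fail, i.e. $p_v(e) - p_v(M^*) \ge p_v(e) - p_v(M) \ge \gamma_e^v$). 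We also need $M(e) < 1$: since $M(e) \le M^*(e) \cdot$... actually $E_M \subseteq E_{M^*}$ does not bound individual values, but $M$ is a half-matching so $M(e) \in \{0, \tfrac12, 1\}$; if $M(e) = 1$ then $e$ does not block, and if $M(e) < 1$ we need $M^*(e) < 1$ too — this follows because a vertex incident to a saturated edge in $M^*$ would force, on the face $P$, that edge to carry all its weight... hmm, this is the delicate point. The cleanest fix: choose $M$ not as an arbitrary vertex of $P$ but include in the defining equalities of $P$ the condition $N(e) = 1$ for every $e$ with $M^*(e) = 1$; this keeps $M^* \in P$ and guarantees $M(e) < 1 \Rightarrow M^*(e) < 1$.

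The main obstacle I anticipate is exactly this last bookkeeping: making sure the half-integral matching $M$ we extract not only has the same size and saturation pattern but also does not \emph{saturate an edge that $M^*$ left unsaturated}, since $\gamma$-blocking requires $M(e) < 1$ and the threshold inequalities compare $p_v(e)$ against $p_v(M)$, which is governed by $E_M(v)$. Pinning down the right face of the matching polytope (fixing saturated edges, saturated vertices, and restricting to $\mathrm{supp}(M^*)$) so that all three invariants transfer simultaneously is the crux; once the face is set up correctly, the verification that $M$ is $\gamma$-stable is the short monotonicity argument above, and $\gamma$-stability is what we use since weak stability, $\gamma$-min and $\gamma$-max stability are all special cases.
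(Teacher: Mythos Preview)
Your core idea --- pass to a half-integral point via Balinski while staying inside $\mathrm{supp}(M^*)$, then use the monotonicity $E_M(v)\subseteq E_{M^*}(v)\Rightarrow p_v(M)\ge p_v(M^*)$ to transfer any $\gamma$-block back to $M^*$ --- is exactly the paper's argument. The paper's execution is cleaner, though: it simply writes $M^*=\sum_i\lambda_iM_i$ as a convex combination of half-matchings (Balinski) and picks any $M_i$ with $\sum_e M_i(e)\ge\sum_e M^*(e)$. Then ``$v$ unsaturated in $M_i\Rightarrow v$ unsaturated in $M^*$'' and $E_{M_i}(v)\subseteq E_{M^*}(v)$ are immediate from $\lambda_i>0$, with no face $P$ to set up.

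Two concrete problems in your version. First, your claim that every vertex of $P$ has size $\sum_e M^*(e)$ is false: take two disjoint edges with $M^*$-values $1$ and $\tfrac12$; then $S$ consists only of the endpoints of the first edge and $P$ has vertices of sizes $1$ and $2$. All you actually need is \emph{some} vertex of $P$ of size at least $|M^*|$, which you get by maximizing size over $P$ --- or, more simply, by the paper's convex-combination argument. (Relatedly, ``saturate exactly $S$'' taken literally involves strict inequalities and is not a face; you only need ``saturate every $v\in S$'', and the converse is unnecessary: if some $v\notin S$ happens to be saturated by $M$, then $p_v(M)\ge 0\ge p_v(\emptyset)=p_v(M^*)$ still holds.) Second, the whole detour about $M(e)<1$ is unnecessary. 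The paper's definition of $\gamma$-blocking has no $M(e)<1$ clause, and in any case $M(e)=1$ forces $E_M(u)=\{e\}$, hence $p_u(e)-p_u(M)=0<\gamma_e^u$, so such an $e$ can never $\gamma$-block. The extra equality constraints you propose pinning onto $P$ are not needed.
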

\begin{proof}
Let $M$ be an arbitrary maximum size $\gamma$-stable fractional-matching. Using the fact that the convex hull of fractional-matchings form a half-integral polyhedron \cite{balinski1965integer}, we get that $M=\sum_i\lambda_iM_i$, where $\sum_i \lambda_i = 1$, $\lambda_i> 0$ and $M_i$ is a half-matching for each $i$. It follows that there exists at least one $M_i$ such that $\sum_{e\in E}M_i(e)\ge \sum_{e\in E}M(e)$. We show that $M_i$ is $\gamma$-stable. 

Suppose that $e=(u,v)$ $\gamma$-blocks $M_i$. If $u$ (or $v$) is unsaturated in $M_i$ then it is also unsaturated in $M$. 
Also, we have that $E_{M_i}(u)\subseteq E_M(u)$ and $E_{M_i}(v)\subseteq E_M(v)$. Therefore, we get that $p_v(M_i)\ge p_v(M)$ and $p_u(M_i)\ge p_u(M)$. 
This implies that $e$ $\gamma$-blocks $M$, a contradiction. 

It follows that there exists an optimal half-matching to \usrti.
\end{proof}

Finally, we show that our algorithm outputs a $\frac{3}{2}$-approximation.

\begin{theorem}\label{thm:maxsrti}
\usrti\ can be $\frac{3}{2}$-approximated in polynomial time.
\end{theorem}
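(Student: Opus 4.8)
The plan is to verify three things, after which the theorem follows. (a) The projection $M$ of the stable half-matching $M'$ is a well-defined half-matching. (b) $M$ is \cgstab. (c) $|M|\ge \tfrac23|N|$, where $N$ is a maximum-size \cgstab\ half-matching (which, by the preceding lemma, has the same size as an optimal \cgstab\ fractional-matching). Polynomiality is immediate: step~1 of the meta-algorithm only quadruples the edge set, and steps~2--3 are polynomial by Cechlárová et al.~\cite{cechlarova2005stable}. Part~(a) is a direct computation: $E'(v)$ is exactly the set of the four copies of all edges in $E(v)$, so $\sum_{e\in E(v)}M(e)=\sum_{f'\in E'(v)}M'(f')\le 1$, and $M(e)\ge\tfrac32$ is impossible since it would put $M'$-weight strictly above $1$ at an endpoint of $e$; hence $M$ is a half-matching.

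For part~(b), suppose $e=(u,v)$ \cgblocks\ $M$; after possibly exchanging the names $u$ and $v$, we may assume $p_u(e)-p_u(M)\ge\gamma_e^u$ and $p_v(e)-p_v(M)\ge\delta_e^v$ (and in particular $M(e)<1$). This is the step where the paper's central observation is used: because $u$ and $v$ rank the four copies of $e$ in opposite orders, the parallel edge serving as $u$'s ``second'' copy $b(e)$ is \emph{the same edge} as the one serving as $v$'s ``third'' copy $c(e)$ — call it $e^\ast$ — and by construction $p_u(e^\ast)=p_u(e)-\gamma_e^u\ge p_u(M)$ while $p_v(e^\ast)=p_v(e)-\delta_e^v\ge p_v(M)$. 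I would then show that $e^\ast$ blocks $M'$: we have $M'(e^\ast)\le M(e)<1$, and a short check using the tie-breaking rule (which demotes every ``worst'' copy below all $a,b,c$-copies, and among equally valued copies orders the $c$-copy above the $b$-copy above the $a$-copy) gives $p'_u(e^\ast)>p'_u(M')$ and $p'_v(e^\ast)>p'_v(M')$, contradicting the stability of $M'$.

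For part~(c), fix the maximum-size \cgstab\ half-matching $N$. Since $M$ and $N$ are half-matchings, $2M$ and $2N$ are multigraphs of maximum degree $2$, so $2M\cup 2N$ decomposes — with the standard care for doubled edges and odd half-integral cycles, as in~\cite{tan1991necessary,cechlarova2005stable} — into alternating closed walks and alternating open walks whose endpoints are unsaturated by $M$ or by $N$. On the closed walks and on the open walks that are not augmenting for $M$, the total $N$-weight does not exceed the total $M$-weight, so it suffices to bound the $N$-weight by $\tfrac32$ times the $M$-weight on every $M$-augmenting open walk; this reduces to excluding a length-$3$ augmenting path $P=(u_0,a,b,u_1)$ with $(u_0,a),(b,u_1)\in N$, $(a,b)\in M$, and $u_0,u_1$ unsaturated by $M$. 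Assuming such a $P$ exists: $u_0$ is unsaturated by $M$, hence by $M'$, and $M'$ gives weight $0$ to every copy of $(u_0,a)$, so the stability of $M'$ forces $a$ to weakly prefer its $M'$-situation, in $\succ'_a$, to every copy of $(u_0,a)$ acceptable to $u_0$; unwinding which copies these are shows that $a$ prefers $(a,b)$ to $(u_0,a)$ by at least $\gamma^a_{(u_0,a)}$ (or at least $\delta^a_{(u_0,a)}$), and the symmetric argument at $u_1$ shows $b$ prefers $(a,b)$ to $(b,u_1)$ by at least the analogous threshold. But then $(a,b)$ \cgblocks\ $N$ — whose partners at $a$ and $b$ are $u_0$ and $u_1$ — contradicting that $N$ is \cgstab.

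Parts~(a) and~(b) are essentially book-keeping. The main obstacle is part~(c): first, making the half-integral walk decomposition precise (what ``alternating walk'' means when edges may carry weight $1$ and the $\tfrac12$-support may contain odd cycles), and second, the length-$3$ exclusion, which is the roommates counterpart of the core lemma behind McDermid's and Király's $\tfrac32$-approximations~\cite{Mcdermid09,kiraly2012linear}. Getting the quantitative strength of ``$a$ prefers $(a,b)$ to $(u_0,a)$'' right — tracking exactly which of the four copies of $(u_0,a)$ are acceptable to $u_0$, and how the thresholds $\gamma_e^a<\delta_e^a$ interact with the copy of $(a,b)$ that $a$ actually uses in $M'$ — is the delicate point, and it is again the place where the ``opposite per-edge ranking suffices'' observation does the work that a global bipartition does in the classical proof.
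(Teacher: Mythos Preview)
Your parts (a) and (b) are fine and essentially match the paper; in (b) you have correctly isolated the central observation that $u$'s second-best copy of $e$ is the \emph{same} parallel edge as $v$'s third-best copy.

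For part (c) your route differs from the paper's. Rather than decomposing $2M\cup 2N$ directly, the paper passes to a bipartite double cover $G''=(V,V';E'')$ in which every half-matching of $G$ becomes a genuine integral matching; the standard symmetric-difference argument on $N''\triangle M''$ then cleanly isolates either a lone $N''$-edge or an $N''$--$M''$--$N''$ path of length three. This sidesteps all of the half-integral subtleties you flag (odd cycles, doubled edges, the possibility $u_0=u_1$, and the fact that in your path $a$ may have a second $N$-partner besides $u_0$), which your outline does not actually resolve.

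More importantly, your length-3 analysis has a real gap. The inequality you write, ``$a$ prefers $(a,b)$ to $(u_0,a)$ by at least $\gamma^a_{(u_0,a)}$'', carries the wrong subscript: for $(a,b)$ to $\gamma$-block $N$ you need a margin of $\gamma^a_{(a,b)}$ or $\delta^a_{(a,b)}$, and these cannot be read off from the copies of $(u_0,a)$. What governs the margin is \emph{which copy of $(a,b)$} carries positive $M'$-weight, not which copies of $(u_0,a)$ are acceptable to $u_0$. The paper uses a step you omit: from $u_0$ unsaturated one deduces that $a$ is $M'$-saturated only with first/second/third copies, and \emph{symmetrically} from $u_1$ unsaturated that $b$ is as well; since a first copy for $a$ is a worst copy for $b$ and conversely, the $M'$-copy of $(a,b)$ must be the second or the third for \emph{both} endpoints. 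Only then does stability of $M'$ against $a$'s best copy of $(u_0,a)$ yield $p_a((a,b))-p_a((u_0,a))\ge\gamma^a_{(a,b)}$ (if it is $a$'s second copy) or $\ge\delta^a_{(a,b)}$ (if it is $a$'s third), with the complementary $\delta$/$\gamma$ inequality at $b$, giving exactly one of the two disjuncts in the definition of $\gamma$-blocking. Treating the two endpoints independently, as your sketch does, fails to exclude the case that $M'$ uses $a$'s best copy of $(a,b)$, in which event you only obtain $p_a((a,b))\ge p_a((u_0,a))$ with no margin at all.
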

\begin{proof}
Let $M$ be the output of the algorithm and $M'$ be the pre-image of $M$ in $I'$.

\begin{claim}
The output half-matching $M$ is \cgstab.

\end{claim}
\begin{claimproof}
Suppose for contradiction that there is a \cgblocking\ edge $e=(v_i,v_j)$ to $M$. Suppose by symmetry that $p_{v_i}(e)\ge p_{v_i}(M)+\gamma_e^{v_i}$ and $p_{v_j}(e)\ge p_{v_j}(M)+\delta_e^{v_j}$. Take the second best copy of $e$ for $v_i$, which is the third best copy of $e$ for $v_j$. This copy of $e$ blocks $M'$ by the definition of the rankings, contradiction. 
\end{claimproof}

\begin{claim}\label{claim:maxsrti}
    For any \cgstab\ half-matching $N$, it holds that $\sum_{e\in E}N(e)\le \frac{3}{2}\sum_{e\in E}M(e)$.
\end{claim}
\begin{claimproof}
Create a bipartite graph $G''=(V,V',E'')$, where $V'=\{ v_i'\mid v_i\in V\}$ is a copy of $V$ and for each $(v_i,v_j)\in E$, we have edges $(v_i,v_j')\in E''$ and $(v_i',v_j)\in E''$. 

Each half-matching $N$ defines a matching $N''$ in $G''$ as follows. %(By a result of Tan\cite{tan1991necessary}, we know that this consists of disjoint value $1$ edges and cycles of value $\frac{1}{2}$ edges).
If $N(e)=1$ for $e=(v_i,v_j)$, then we add $(v_i,v_j')$ and $(v_i',v_j)$ to $N''$. If we have a cycle of value $\frac{1}{2}$ edges on vertices $\{ v_{i_1},\dots , v_{i_k}\} $, then we add $(v_{i_1},v_{i_2}'), (v_{i_2},v_{i_3}'),\dots, (v_{i_k},v_{i_1}')$. Finally, if we have a path of value $\frac{1}{2}$ edges on vertices $\{ v_{i_1},\dots , v_{i_k}\} $, then we add $(v_{i_1},v_{i_2}'), (v_{i_2},v_{i_3}'),\dots, (v_{i_{k-1}},v_{i_k}')$.

Let $N$ be a maximum size $\gamma$-stable half-matching. 
Take the matchings $N'',M''$ corresponding to $N$ and $M$. 

Suppose for the contrary that $\sum_{e\in E}N(e)> \frac{3}{2}\sum_{e\in E}M(e)$. Then, we have that $|N''|>\frac{3}{2}|M''|$. This implies that either there is a component in $N''\triangle M''$ that has a single $N''$-edge or a component with two $N''$-edges and a single $M''$-edge. 

In the first case, we get that there is an edge $(v_i,v_j)$ such that $v_i$ and $v_j$ are both unsaturated in $M$, contradicting the $\gamma$-stability of $M$. In the second case, let the component be $\{ e''=(v_{i_1},v_{i_2}'),f''=(v_{i_2}',v_{i_3}),g''=(v_{i_3},v_{i_4}')\} $, where $f''=(v_{i_2}',v_{i_3})$ is the edge of $M''$. We have that $i_1,i_2,i_3$ are distinct and so are $i_2,i_3,i_4$, but it may happen that $i_1=i_4$. 

Since $v_{i_1}$ and $v_{i_4}'$ are unmatched by $M''$, $v_{i_1}$ and $v_{i_4}$ are unsaturated in $M'$. Since the best copy of $e$ for $v_{i_2}$ does not block $M'$, $v_{i_2}$ is saturated with first, second or third copies. Since the best copy of $g$ for $v_{i_3}$ does not block $M'$, $v_{i_3}$ is saturated with first, second or third copies. Hence, either the second or the third best copy of $f=(v_{i_2},v_{i_3})$ for $v_{i_2}$ has positive weight in $M'$. Suppose by symmetry it is the second best copy. Then, for the same reason, we must have $p_{v_{i_2}}(e)\ge p_{v_{i_2}}(f)+\gamma_e^{v_{i_2}}$ and $p_{v_{i_3}}(e)\ge p_{v_{i_3}}(g)+\delta_e^{v_{i_3}}$. This contradicts the $\gamma$-stability of $N$.
\end{claimproof}
\end{proof}

\subsubsection{Maximum Size Weakly Stable Fractional-Matching}

For completeness, we describe an even simpler algorithm for the special case \srti, i.e. the case, when there are no $\gamma$ numbers, each agent $v$ just has a preference valuation $p_v()$, inducing a weak preference order $\succeq_v$ ($e\succ_v f \Leftrightarrow p_v(e)>p_v(f)$ and $e\succeq_{v} f \Leftrightarrow p_v(e)\ge p_v(f)$) and our goal is to find a maximum size weakly stable half-matching.

Then, in the first step of the algorithm, we do the following. For each $e=(v_i,v_j), i<j$, we create three parallel edges $e^i,e^0,e^j$. For $v_i$, $e^i$ is best, $e^0$ is middle and $e^j$ is worst, and for $v_j$, $e^j$ is best, $e^0$ is middle and $e^i$ is worst. Then, the preference orders for $v_i\in V$ are created from $\succeq_{v_i}$ such that for each $e\in E(v_i)$, $e^i\succ e^0$, and $f^0\succ e^i$ if and only if $f$ is strictly preferred to $e$. Within the same copy types, the edges are ranked according to an arbitrary strict extension of $\succeq_{v_i}$. The copies $e^j$, $j\ne i$ are ranked last in the order $\succeq_v$, breaking the ties in an arbitrary way. An easy way to create these preferences is to go through $\succeq_{v_i}$, and in each tie $[e_1,..,e_k]$ just substitute it with $e^i_1\succ \cdots e_k^i\succ e_1^0\succ \cdots e_k^0$, then append to worst copies of the edges strictly after them by breaking the ties arbitrarily.

To give an example, let $\succeq_{v_i} = e\succ [f,g] \succ h$. Then, we get $e^i\succ e^0 \succ f^i\succ g^i \succ f^0 \succ g^0 \succ h^i \succ h^0 \succ e^{j_1} \succ f^{j_2} \succ g^{j_3} \succ h^{j_4}$, where $e=(v_i,v_{j_1}), f=(v_i,v_{j_2}), g=(v_i,v_{j_3}), h=(v_i,v_{j_4})$. 

\begin{theorem}\label{thm:maxsrtisimple}
\srti\ can be $\frac{3}{2}$-approximated in polynomial time even with the simpler algorithm.
\end{theorem}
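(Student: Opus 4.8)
The plan is to run the meta-algorithm with the three-copy construction for \srti\ and then argue exactly as in the proof of Theorem~\ref{thm:maxsrti}, with the single middle copy $e^{0}$ playing the combined role of the second and third copies there; this is legitimate because, in the absence of $\gamma$-values, ``$e$ blocks at a vertex $v$'' simply means $v$ strictly prefers $e$, so one threshold copy suffices. First I would invoke the lemma guaranteeing a half-integral optimal solution (its proof uses only half-integrality of the fractional-matching polytope and the fact that shrinking the support preserves weak stability, so it applies verbatim here): it then suffices to compare the output $M$ against an arbitrary weakly stable half-matching $N$.

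Second, I would verify that the output $M$ is itself a weakly stable half-matching. If $e=(v_{i},v_{j})$ blocked $M$, then $M(e)<1$, hence $M'(e^{0})<1$; and since $p_{v_{i}}(e)>p_{v_{i}}(M)$, the vertex $v_{i}$ is either unsaturated in $M'$ or covered in $M'$ by a copy of some $f\prec_{v_{i}}e$, while the construction ranks $e^{0}$ above every best copy of such an $f$ (by the rule ``$e^{0}\succ_{v}f^{\mathrm{best}}\iff e\succ_{v}f$''), above its middle copy $f^{0}$ (middle copies follow a strict extension of $\succeq_{v}$), and above all worst copies; symmetrically for $v_{j}$. Hence $e^{0}$ blocks the stable half-matching $M'$, a contradiction.

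Third, for the size bound I would reuse the bipartite gadget $G''=(V,V',E'')$ and the maps $N\mapsto N''$, $M\mapsto M''$ from the proof of Claim~\ref{claim:maxsrti} (so $|N''|=2\sum_{e}N(e)$, $|M''|=2\sum_{e}M(e)$, and any vertex saturated in $M$ has both its copies covered by $M''$). Assuming $\sum_{e}N(e)>\tfrac32\sum_{e}M(e)$ gives $|N''|>\tfrac32|M''|$, so $N''\triangle M''$ has a component that is a lone $N''$-edge or an alternating path $e''\!-\!f''\!-\!g''$ with $f''\in M''$. A lone $N''$-edge yields an edge of $E$ with both endpoints unsaturated in $M$, which blocks $M$ -- contradiction. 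In the path case, the endpoints $v_{i_1},v_{i_4}$ of the component carry no $M''$-edge, hence are unsaturated in $M$ and in $M'$; stability of $M'$ (no block through unsaturated $v_{i_1}$ by the best copy of $e=(v_{i_1},v_{i_2})$ at $v_{i_2}$, and symmetrically through $v_{i_4}$ by the best copy of $g=(v_{i_3},v_{i_4})$ at $v_{i_3}$), plus the observation that a best-for-$v_{i_2}$ copy of $f=(v_{i_2},v_{i_3})$ -- which is the worst-for-$v_{i_3}$ copy, hence below the best copy of $g$ for $v_{i_3}$ -- cannot be used by $M'$ (and symmetrically the best-for-$v_{i_3}$ copy), forces $M'(f^{0})=M(f)>0$. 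Then $f^{0}\succeq_{v_{i_2}}(\text{best copy of }e)$ and $f^{0}\succeq_{v_{i_3}}(\text{best copy of }g)$, and being distinct copies these are strict, so the construction rule yields $f\succ_{v_{i_2}}e$ and $f\succ_{v_{i_3}}g$; since moreover $v_{i_2}$ is covered by $e$ and $v_{i_3}$ by $g$ in $N$ and $N(f)<1$, the edge $f$ blocks $N$ -- contradicting weak stability of $N$. Combined with the half-integrality lemma this gives the $\tfrac32$-approximation.

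The step I expect to be delicate is the one isolating $f^{0}$ as the $M'$-copy of $f$ in the path case: this is exactly where one must check that merging the two threshold copies of Theorem~\ref{thm:maxsrti} into a single middle copy loses no information, which holds because without $\gamma$-values the conditions $p(e)-p(M)\ge\gamma$ and $p(e)-p(M)\ge\delta$ both degenerate to strict preference. Everything else is a line-by-line specialization of the $\gamma$-general arguments.
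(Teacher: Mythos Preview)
Your proposal is correct and follows essentially the same route as the paper's own proof: you establish weak stability of $M$ by showing that a blocking edge $e$ would make $e^{0}$ block $M'$, and you prove the size bound via the same bipartite double-cover $G''$ and the same case analysis on short components of $N''\triangle M''$, pinning down $M'(f^{0})>0$ from the two ``best-copy doesn't block'' constraints and then deriving $f\succ_{v_{i_2}}e$, $f\succ_{v_{i_3}}g$ to contradict stability of $N$. The only cosmetic difference is that you explicitly invoke the half-integrality lemma up front, whereas the paper states the approximation claim directly against half-matchings; the logic is otherwise line-for-line the same specialization of the $\gamma$-general argument that you anticipated.
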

\begin{proof}
Let $M$ be the output of the simpler algorithm and $M'$ be the pre-image of $M$ in $I'$.

\begin{claim}
The output half-matching $M$ is weakly stable.

\end{claim}
\begin{claimproof}
Suppose for contradiction that there is a blocking edge $e=(v_i,v_j)$ to $M$. Then, the copy $e^0$ blocks $M'$ by the definition of the rankings, contradiction. 
\end{claimproof}

\begin{claim}\label{claim:maxsrti}
    For any weakly stable half-matching $N$ it holds that $\sum_{e\in E}N(e)\le \frac{3}{2}\sum_{e\in E}M(e)$.
\end{claim}
\begin{claimproof}
Let $N$ be a maximum size weakly stable half-matching. 

Create a bipartite graph $G''=(V,V',E'')$, and matchings $N'',M''$ as in Theorem \ref{thm:maxsrti}.

Suppose for the contrary that $\sum_{e\in E}N(e)> \frac{3}{2}\sum_{e\in E}M(e)$. Then, we have that $|N''|>\frac{3}{2}|M''|$. This implies that either there is a component in $N''\triangle M''$ that has a single $N''$-edge or a component with two $N''$-edges and a single $M''$-edge. 

In the first case, we get that there is an edge $(v_i,v_j)$ such that $v_i$ and $v_j$ are both unsaturated in $M$, contradicting the weak stability of $M$. In the second case, let the component be $\{ e''=(v_{i_1},v_{i_2}'),f''=(v_{i_2}',v_{i_3}),g''=(v_{i_3},v_{i_4}')\} $, where $f''=(v_{i_2}',v_{i_3})$ is the edge of $M''$, i.e. $M(f)>0$. We have that $i_1,i_2,i_3$ are distinct and so are $i_2,i_3,i_4$, but it may happen that $i_1=i_4$. 

Since $v_{i_1}$ and $v_{i_4}'$ are unmatched by $M''$, $v_{i_1}$ and $v_{i_4}$ are unsaturated in $M'$. Since $e^{i_2}$ does not block $M'$, only $f^{i_2},f^0$ can have positive value in $M'$. Since $g^{i_3}$ does not block $M'$, only $f^{i_3},f^0$ can have positive value in $M'$. Hence, we have that $M'(f^0)>0$. Then, since $e^{i_2}$ and $g^{i_3}$ does not block $M'$, we must have $f\succ_{v_{i_2}} e$ and $f\succ_{v_{i_3}} g$, contradicting the stability of $N$. 
\end{claimproof}
\end{proof}

\subsection{Maximum Size Popular Fractional-Matching}\label{sec:maxpop}

In this section we solve the maximum size popular fractional-matching problem, \maxpri\ for short. Recall that for this problem, we assumed that we have strict preference orders $\succ_v$ for each agent $v\in V$.

\paragraph{The fine-tuned first step.} We describe the fine-tuned first step of the meta-algorithm.
For each edge $e=(v_i,v_j)$, $i<j$, we create two parallel edges $e_a,e_b$. Agent $v_i$ considers $e_a$ the good copy and $e_b$ the bad one, while $v_j$ considers $e_b$ the good and $e_a$ the bad one. 
The new strict preferences are created such that for any vertex $v_i$, it first ranks the good copies in their original order and strictly after then the bad copies in their original order. 

\begin{theorem}\label{thm:maxpri}
    \maxpri\ is solvable in polynomial time. Furthermore, the output of the algorithm is half-integral, extra-popular and has maximum size even among the barely-defendable fractional-matchings. 
\end{theorem}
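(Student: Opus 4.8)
The plan is to verify four things for the output matching $M$: (i) polynomial running time, (ii) half-integrality, (iii) extra-popularity, and (iv) maximality among barely-defendable fractional-matchings. Items (i) and (ii) are essentially immediate: the meta-algorithm only doubles the number of edges and then runs the polynomial-time extension of Tan's algorithm of Cechlárová et al.\ \cite{cechlarova2005stable}, which produces a half-integral stable matching $M'$ in $I'$; its projection $M(e)=M'(e_a)+M'(e_b)$ is then a half-matching in $I$ (each vertex is saturated to at most $1$, and the projected values lie in $\{0,\tfrac12,1\}$ since at most one copy of each edge can be at $\tfrac12$ at the same vertex by the structure of a stable half-matching — this needs a short argument using that $e_a,e_b$ are consecutive-block in no vertex's order, so at a $\tfrac12$-cycle each vertex uses a unique copy). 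So the real content is (iii) and (iv).

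For extra-popularity, I would follow the classical dual-certificate / LP-duality route used for popular matchings in bipartite graphs (Kavitha \cite{kavitha2014size}), adapted via the paper's ``crucial observation.'' Concretely: from the stable half-matching $M'$ in $I'$ I would extract a ``level'' or witness assignment — for each vertex $v$ a value $\alpha_v\in\{-1,0,+1\}$ recording whether $v$ is matched along a good copy, a bad copy, or unmatched — and show that for \emph{every} fractional-matching $N$ and every sensible pairing $\phi$, the quantity $\Delta(M,N,\phi)$ is bounded below by a sum that telescopes to $0$. The key local inequality, to be proved edge by edge: for each $e=(u,v)\in E$ and the mass $\phi_u(e',f)$ or $\phi_v(e',f)$ transferred in $N$'s favor, the vote lost at $u$ or $v$ is compensated by the level bookkeeping, precisely because stability of $M'$ in $I'$ forbids the good copy $e_a$ (ranked high by $u$) and the bad copy $e_b$ (ranked high by $v$) from simultaneously blocking. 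The ``almost-opposite'' ranking of the two copies — good copies first in original order, then bad copies in original order — is exactly what makes each individual edge's contribution nonnegative after accounting; this is where I expect to lean on stability of $M'$ most heavily, and it is the main obstacle: one must rule out, for a sensible (not merely feasible) pairing, the degenerate pairings allowed by axiom~5 where an edge $e$ with $M(e)=N(e)$ is re-paired, and show the level argument still closes. I would handle this by checking that such re-pairings only move mass between equal-value copies and hence cannot change the sign of the total.

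For (iv), maximality among barely-defendable matchings, the argument is a size comparison: suppose $N$ is barely-defendable with $|N|>|M|$. Using the half-integrality of the matching polytope I may assume $N$ is a half-matching, and then I would build the bipartite ``doubling'' graph $G''$ as in Theorem~\ref{thm:maxsrti} and compare $N''$ with $M''$; a component of $N''\triangle M''$ witnessing $|N''|>|M''|$ would be either an augmenting edge or an augmenting path of length three through an $M''$-edge. In either case I would exhibit a specific fractional-matching $N^\ast$ (a small perturbation of $N$ along that component) together with the \emph{forced} sensible pairing against $M$, and compute $\Delta(M,N^\ast,\phi)<0$ for \emph{every} sensible $\phi$, contradicting barely-defendability of $N$. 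The computation is local to the bad component and mirrors the vote-counting in Example~1: the two extra $N$-edges pick up strictly more negative votes at the endpoints (which are unsaturated in $M$) than the $M$-edge can recover, again because of the good/bad copy ordering forced by stability of $M'$. The delicate point here is that barely-defendability only promises \emph{some} good pairing for $N$, so I must make sure the contradiction is obtained against $N^\ast$ (a different matching) for which \emph{all} pairings are bad — this is why I construct $N^\ast$ rather than arguing directly with $N$.
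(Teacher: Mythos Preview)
Your plan for (iii) via a per-vertex level certificate $\alpha_v\in\{-1,0,+1\}$ runs into a genuine obstacle in the roommates half-integral setting: a vertex $v$ can perfectly well have $M'(e_a)=M'(f_b)=\tfrac12$ for two distinct incident edges $e,f$, i.e.\ be simultaneously ``half good-level'' and ``half bad-level''. A single scalar $\alpha_v$ does not encode this, and the local edge inequality you sketch (``stability of $M'$ forbids both copies from simultaneously blocking'') does not close without tracking which half of $v$'s mass is being paired against which $N$-edge. Your handling of the extra freedom in sensible (versus feasible) pairings --- ``such re-pairings only move mass between equal-value copies and hence cannot change the sign'' --- is not an argument; axiom~5 only equates the diagonals $\phi_u(e,e)=\phi_v(e,e)$, and after you subtract those off you still have a genuinely two-dimensional pairing at each vertex to control. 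The paper does not use vertex levels at all: it instead \emph{decomposes the sensible pairing itself} into alternating walks $P^l=(e_0,f_1,e_1,\dots,e_{j_l})$ of $M$-edges and $N$-edges (via an auxiliary graph construction), and proves walk-by-walk that the contribution to $\Delta(M,N,\phi)$ is nonnegative, using a case analysis on whether the copy of $e_i$ with positive $M'$-value is good or bad for $v_i$. This decomposition is the missing idea.

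Your plan for (iv) is backwards. You write that you will exhibit $N^\ast$ (a perturbation of $N$) and compute $\Delta(M,N^\ast,\phi)<0$ for every sensible $\phi$; but that would contradict the extra-popularity of $M$ you just claimed to prove, not the barely-defendability of $N$. What you actually need is a matching that strictly beats $N$ under every sensible pairing, and the natural candidate is $M$ itself: one must show $\Delta(M,N,\phi)>0$ for every sensible $\phi$ whenever $|N|>|M|$, so that the transposed pairing witnesses $\Delta(N,M,\phi')<0$. The paper gets this \emph{for free} from the same walk decomposition: walks with $e_0=e_{j_l}=\emptyset$ (those carrying excess $N$-mass) are shown to contribute strictly positively, and since $|N|>|M|$ forces at least one such walk with positive weight, the total is strictly positive. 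Your separate augmenting-component argument via $G''$ is unnecessary, and the preliminary reduction of $N$ to a half-matching is unjustified (barely-defendability is not obviously preserved when passing to a half-integral extreme point of the same size). Finally, the local vote count you describe --- ``the two extra $N$-edges pick up strictly more negative votes at the unsaturated endpoints than the $M$-edge can recover'' --- has the sign wrong: the paper shows precisely that the interior good/bad alternation \emph{does} recover those $-1$ votes, and strictly.
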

\begin{proof}
Take the output half-matching $M$ of the algorithm. Take any fractional-matching $N$ and also an arbitrary sensible-pairing $\phi$ between $M$ and $N$.

%Consider a bipartite graph $G_0=(A,B,E_0)$, where each edge $e$ with $M(e)>N(e)$ has a corresponding vertex $v_e$ in $A$ and each edge $e$ with $N(e)>M(e)$ has a vertex $v_e$ in $B$. Also, we have a vertex corresponding to $\emptyset$ for both $A$ and $B$. Let $E_0$ consist of those $(v_e.v_f)$ pairs for which we have $\phi_0(e,f):=\phi (e,f)>0$ (note that $e$ or $f$ may be $\emptyset$, but not both). In case we have parallel edges in $G$, then we have as many parallel edges between $v_e$ and $v_f$ as the number of vertices $v$ with $\phi_v(e,f)>0$.

We decompose $\phi$ as follows.
First, we define $\phi^0$ by setting $\phi_v^0 (e,e)=0$ and $\phi_v^0(e,f)=\phi_v (e,f)$ for $e\ne f$. We also set $M^0(e)=M(e)-\phi_v(e,e)=M(e)-\phi_u(e,e)$ and $N^0(e)=N(e)-\phi_v(e,e)$ for all $e\in E$. We can do this, as $\vote_v(e,e)$ is always 0, so this does not change $\Delta (M,N, \phi)$, i.e. $\Delta (M,N,\phi )=\Delta (M^0,N^0,\phi^0)$. We also must still have that for each $e=(v_i,v_j)$, 
\begin{enumerate}
\item \label{item1}$\sum\limits_{f\in E(v_i)\cup \{ \emptyset \}}\phi^0_{v_i}(e,f) = \sum\limits_{f\in E(v_j)\cup \{ \emptyset \}}\phi^0_{v_j}(e,f)=M^0(e)$ and
\item \label{item2} $\sum\limits_{f\in E(v_i)\cup \{ \emptyset \}}\phi^0_{v_i}(f,e) = \sum\limits_{f\in E(v_j)\cup \{ \emptyset \}}\phi^0_{v_j}(f,e)=N^0(e)$, 
\end{enumerate}
because of the axiom $\phi_{v_i}(e,e)=\phi_{v_j}(e,e)$ on sensible-pairings.

Let $k=0$.
Create a graph $H^k=(W^k,E^k)$, where each $e=(v_i,v_j)\in E$ with $N^k(e)>0 $ receives two vertices $e_N^i,e_N^j$ and each with $M^k(e)>0$ receives vertices $e_M^i,e_M^j$. We also add a vertex $v^{\emptyset}$ corresponding to $\emptyset$. Then, we define the edges of $H^k$ such that $(e_M^i,f_N^j)$ is an edge if and only if $i=j$ and $\phi^k_{v_i}(e,f)>0$. We also add edges $(e_M^i,e_M^j)$ and $(e_N^i,e_N^j)$ for each $e=(v_i,v_j)$. Finally, we add an edge $(e_M^i,v^{\emptyset})$ if and only if $\phi^k_{v_i}(e,\emptyset)>0$ and an edge $(e_N^i,v^{\emptyset})$ if $\phi_{v_i}(\emptyset , e)>0$. Let $F^k=\{(e_M^i,e_M^j)\mid e\in E, \; M^k(e)>0\} \cup \{ (e_N^i,e_N^j)\mid e\in E, \; N^k(e)>0$\}. 

To proceed with the proof, we need the following claim.

\begin{claim}\label{claim:aux}
    Let $H=(W,E\cup F)$ ($E\cap F=\emptyset$) be a multigraph with $|E\cup F|\ge 1$, where $F$ is an almost perfect matching that covers all but one vertex $v_0\in W$ and every vertex $v\in W\setminus \{ v_0\}$ has degree at least one in $E$. Then, it holds that either there is an $E-F$-alternating closed walk in $H$, or an almost $E-F$-alternating closed walk, which alternates at every vertex except $v_0$. 
\end{claim}
\begin{claimproof}
    We prove this by induction on $|W|$. Clearly, as $F$ forms an almost perfect matching in $H$, where only $v_0$ is unmatched, we have that $|W| = 2m+1$ for some $m\in \mathbb{Z}$. If $|W|=1$, then $W=\{ v_0 \}$, so by $|E\cup F|\ge 1$, there is a loop edge $v_0v_0$ and the statement follows.  

    Suppose we know the statement for $|W|=2m+1$ and take the case, where $|W|=2m+3\ge 3$. Then, $F\ne \emptyset$. Let $(u,v)\in F$. If there is an edge $(u,v)\in E$ too, then we are done, since we found a closed $E-F$-alternating cycle. Otherwise, as $u$ and $v$ have degree at least one in $E$, we can create nonempty sets $U\subseteq W$ and $V\subseteq W$, such that each vertex in $U$ is connected to $u$ in $E$ and each vertex in $V$ is connected to $v$ in $E$. 

    Take the graph $H'$, which is the restriction of $H$ to $W\setminus \{ u,v\}$, but for each $u'\in U'$ and $v'\in V'$, we add an edge $(u',v')$. We call these edges virtual edges. Let $F'$ be the restriction of $F$ to $H'$ and $E'$ be the other edges. For this graph, we have that $|W'|=2m+1$, $F'$ forms an almost perfect matching, and every $v\in W'\setminus \{ v_0\}$ has degree at least one in $E'$ (if a vertex $w$ was only adjacent to $u$ or $v$, then we created at least one new edge incident to $w$). Since $E'\ne \emptyset$, as $|V'|\ge 1,|U'|\ge 1$, we have by induction that there is an $E'-F'$-alternating closed walk in $H'$ or an almost $E'-F'$-alternating closed walks, which alternates except from $v_0$. From this, we can create an $E-F$-alternating closed walk or an almost $E-F$-alternating closed walk in $H'$ by substituting each virtual edge $(u',v')$ with $(u',u),(u,v),(v,v'($.
\end{claimproof}

Observe that $F^k$ is an almost perfect matching in $H^k$ covering all vertices but $v^{\emptyset}$ and each vertex apart from $v^{\emptyset}$ is adjacent to at least one edge from $E^k\setminus F^k$. The latter holds as we have (and will maintain) properties \ref{item1} and \ref{item2}, meaning that for each $e=(v_i,v_j)$ that $\sum\limits_{f\in E(v_i)\cup \{ \emptyset \}}\phi^k_{v_i}(e,f) = \sum\limits_{f\in E(v_j)\cup \{ \emptyset \}}\phi^k_{v_j}(e,f)=M^k(e)$ and $\sum\limits_{f\in E(v_i)\cup \{ \emptyset \}}\phi^k_{v_i}(f,e) = \sum\limits_{f\in E(v_j)\cup \{ \emptyset \}}\phi^k_{v_j}(f,e)=N^k(e)$. 
Hence, using Claim~\ref{claim:aux} for $H^k=(W^k,E^k\setminus F^k \cup F^k)$ and $v_0=v^{\emptyset}$ we get that, if $\phi^k$ is not the all zero function yet (so $E^k\ne \emptyset$ by properties \ref{item1}, \ref{item2} and the construction of $H^k$), then either there is an $F^k-E^k\setminus F^k$ alternating closed walk in $H^k$ or an almost $F^k-E^k\setminus F^k$ alternating closed walk, which alternates at every vertex except $v^{\emptyset}$. 
%To show this, we create a graph $\widetilde{H^k}$, with the same set of vertices as $H^k$, but we create parallel edges from each edge as follows. For each vertex $w\in W^k$, let $d_{E^k\setminus F^k}(w)$ be the degree of $w$ in $(W^k,E^k\setminus F^k)$. Let $x$ be the smallest common multiple of $\{ d_{E^k\setminus F^k}(w)\mid w\in W^k\}$. Then, for an edge $(e_M^i,e_N^i)\in E^k\setminus F^k$ we create $x/d_{E^k\setminus F^k}($

Let this closed walk be $C^k$. Also observe that by the construction of $H^k$, the edges $C^k\cap F^k$ alternate between $(e_M^i,e_M^j)$ type and $(e_N^i,e_N^j)$ type edges. Hence, $C^k\cap F^k$ gives a walk $P^k$ in $G$, where $P^k=(e_0,f_1,e_1,\dots, f_{j_k},e_{j_k})$, with $M^k(e_i)>0$, $N^k(f_i)>0$ and $\phi_{v_{i-1}}(e_{i-1},f_i)>0, \phi_{u_i}(e_i,f_i)>0$ and it may happen that $e_0=\emptyset$, $e_{j_k}=\emptyset$ or $e_0=e_{j_k}$ 
 (note that an edge can have $N^k(e)>0$ and $M^k(e)>0$ simultaneously). 

 Intuitively, what we do in the following paragraphs is decreasing the $M^k,N^k$ and $\phi^k$ values according to the $e_i,f_i$ edges and $(e_{i-1},f_i),(e_i,f_i)$ comparisons that are given by the closed walk $C^k$, until one new $\phi^k(e,f)$ becomes 0 (while also being careful about the possibly different multiplicities of these in $C^k$).
 
 Let $e_i=(u_i,v_i)$ and $f_i=(v_{i-1},u_i)$. Let $\alpha_{i-1,i}$ be the multiplicity of the edge of $E^k$ corresponding to the comparison $(e_{i-1},f_i)$ in $C^k$ (the edge corresponding to a comparison $(e,f)$ is $(e_M^i,f_N^i)$), $\alpha_{i,i}$ be the multiplicity of the edge of $E^k$ corresponding to the comparison $(e_i,f_i)$ in $C^k$, $\alpha_i$ be the multiplicity of the edge of $F^k$ corresponding to the $M$-version of $e_i$ (the $M$-version of an edge $e$ is $(e_M^i,e_M^j)$) in $C^k$
 and finally $\beta_i$ be the multiplicity of the edge of $F^k$ corresponding to the $N$-version of $f_i$ in $C^k$. Similarly, let $\alpha_{0,0}$ be the multiplicity of the edge of $E^k$ corresponding to the comparison $(e_0,\emptyset)$ in $C^k$, if $v^{\emptyset}\in C^k$, but $e_0\ne \emptyset$ and $\alpha_{j_k,0}$ be the multiplicity of the edge of $E^k$ corresponding to the comparison $(e_{j_k},\emptyset)$ in $C^k$, if $v^{\emptyset}\in C^k$, but $e_{j_k}\ne \emptyset$.
 
 %Note that we can assume that the multiplicity of any edge of the form $(e_M^i,v^{\emptyset})$ or $(e_N^i,v^{\emptyset})$ is one, since if $C^k$ traverses $v^{\emptyset}$ more than one, then we can choose a segment between two incident appearance of $v^{\emptyset}$.
Set the weight of $P^k$ to be \begin{itemize}
    \item[--] $w_k=\min_{i\in [j_k]}\{\phi_{v_{i-1}}^k (e_{i-1},f_i)/\alpha_{i-1,i},\phi_{u_i}^k (e_i,f_i)/\alpha_{i,i}\}$, if $e_0=e_{j_k}$,
    \item[--]$w_k=\min_{i\in [j_k]}\{\phi_{v_{i-1}}^k (e_{i-1},f_i)/\alpha_{i-1,i},\phi_{u_i}^k (e_i,f_i)/\alpha_{i,i}, \phi_{u_0}(e_0,\emptyset)/\alpha_{0,0}\}$ if $e_0\ne \emptyset, e_{j_k}=\emptyset$, 
    \item[--] $w_k=\min_{i\in [j_k]}\{\phi_{v_{i-1}}^k (e_{i-1},f_i)/\alpha_{i-1,i},\phi_{u_i}^k (e_i,f_i)/\alpha_{i,i}, \phi_{v_{j_k}}(e_{j_k},\emptyset)/\alpha_{j_k,0}\}$, if $e_0=\emptyset, e_{j_k}\ne \emptyset$ and finally 
    \item[--] $w_k=\min_{i\in [j_k]}\{\phi_{v_{i-1}}^k (e_{i-1},f_i)/\alpha_{i-1,i},\phi_{u_i}^k (e_i,f_i)\alpha_{i,i}, \phi_{u_0}(e_0,\emptyset )/\alpha_{0,0}, \phi_{v_{j_k}}(e_{j_k},\emptyset)/\alpha_{j_k,0}\}$, if $\emptyset\ne e_0\ne e_{j_k}\ne \emptyset$. 
\end{itemize} 

Then, we let $k:=k+1$, $\phi_{v_{i-1}}^k(e_{i-1},f_i)=\phi_{v_{i-1}}^{k-1}(e_{i-1},f_i)-\alpha_{i-1,i}w_{k-1}$, $\phi^k_{u_i}(e_i,f_i)=\phi_{u_i}^{k-1}(e_i,f_i)-\alpha_{i,i}w_{k-1}$ ($i=1,\dots, j_{k-1}$) and $\phi_{v}^k(e,f)=\phi_{v}^{k-1}(e,f)$ otherwise. If $v^{\emptyset}\in C^{k-1}$ but $e_0\ne \emptyset$, then we let $\phi^k_{u_0}(e_0,\emptyset )=\phi^{k-1}_{u_0}(e_0,\emptyset) -\alpha_{0,0}w_{k-1}$, $\phi^k_{v_{j_{k-1}}}(e_{j_k},\emptyset) =\phi_{v_{j_{k-1}}}^{k-1}(e_{j_k},\emptyset) - \alpha_{j_{k-1},0}w_{k-1}$ and $\phi^k_v(e,\emptyset)=\phi^{k-1}_v(e,\emptyset)$ otherwise. %Also, $\phi_{v}^k(\emptyset , f) = \phi^{k-1}_{v}(\emptyset ,f)-w_{k-1}$, if $e_0=\emptyset, f=f_1, v=v_0$ or $e_{j_k}=\emptyset,f=f_{j_k},v=u_{j_k}$ and  $ \phi^k_{v}(\emptyset , e)=\phi^{k-1}_{v}(\emptyset, e)$ otherwise.

We let $M^{k}(e_i)=M^{k-1}(e_i)-\alpha_iw_{k-1}$ ($i=0,1,\dots, j_{k-1}$) and $M^k(e)=M^{k-1}(e)$ otherwise. We let $N^{k}(f_i)=N^{k-1}(f_i)-\beta_iw_{k-1}$ ($i=1,\dots, j_{k-1}$) and $N^k(f)=N^{k-1}(f)$ otherwise. By the choice of $w_{k-1}$, every value remains nonnegative. 

Finally, we define $H^k$ from $M^k,N^k$ and $\phi^k$ the same way and iterate this procedure until $\phi^k \equiv 0$. We can iterate it, since each vertex we keep in $H^k$ apart from $v^{\emptyset}$ remains incident to at least one edge from both $F^k$ and $E^k\setminus F^k$ as the value updates maintain $\sum_{f\in E(v)\cup \{ \emptyset \}}\phi_v^k(e,f) = \sum_{f\in E(u)\cup \{ \emptyset \}}\phi_u^k(e,f)=M^k(e)$ and $\sum_{f\in E(v)\cup \{ \emptyset \}}\phi_v^k(f,e) = \sum_{f\in E(u)\cup \{ \emptyset \}}\phi_u^k(f,e)=N^k(e)$. This procedure terminates in finitely many steps, as in each step at least one new $\phi^k_{v}(e,f)$ value becomes $0$. 

For each walk $P^l$, let $P^l=(e_0,f_1,e_1,f_2,e_2\dots, e_{j_l})$, where $M(e_i)>0$ and $N(f_i)>0$ for each $i$ and $e_0=e_{j_l}$, if $P^l$ is a closed walk. Note that $e_0,e_{j_l}$ can be $\emptyset$.

Recall that $e_i=(u_i,v_i)$ and $f_i=(v_{i-1},u_i)$. Of course, it may happen that $v_i=v_j$ or $u_i=v_j$ for some $i,j$, however, this does not matter for the proof. Note that depending on the above cases, we can have $u_0=\emptyset$, $v_{j_l}=\emptyset$ or $u_0=u_{j_l}, v_0=v_{j_l}$.

What is important however is that we have that no two adjacent edges are the same in any of the walks $P^l$ by the fact that $\phi_v^k(e,e)$ is always 0. This also means that $\vote_{v_{j-1}} (e_{j-1},f_j)+\vote_{u_j}(e_j,f_j))\in \{ -2,0,+2\}$, as the preferences are strict. 

Then, we have that $\Delta (M,N,\phi ) =\sum\limits_{l=1}^kw_l\cdot \sum\limits_{j=1}^{j_l}(\vote_{v_{j-1}} (e_{j-1},f_j)+\vote_{u_j}(e_j,f_j)+x(P^l))$, where $x(P^l)=\vote_{u_0}(e_0,\emptyset)+\vote_{v_{j_l}}(e_{j_l},\emptyset)$ if $e_0\ne e_{j_l}$ and 0 otherwise. Therefore, it is enough to show that for each walk $P^l$, $\sum\limits_{j=1}^{j_l}(\vote_{v_{j-1}} (e_{j-1},f_j)+\vote_{u_j}(e_j,f_j)+x(P^l))\ge 0$ and it is strictly larger than $0$, whenever we have $e_0=e_{j_l}=\emptyset$.

\begin{claim}
\begin{enumerate}
    \item\label{case1} If $e_i$ has a bad copy for $v_i$ with positive value in $M'$ and $e_{i+1}$ has a good copy for $v_{i+1}$ with positive value in $M'$, then $\vote_{v_i}(e_i,f_{i+1})+\vote_{u_{i+1}}(e_{i+1},f_{i+1})= +2$.
    \item \label{case2}If $e_i$ has a bad copy for $v_i$ with positive value in $M'$ or $e_{i+1}$ has a good copy for $v_{i+1}$ with positive value in $M'$, then $\vote_{v_i}(e_i,f_{i+1})+\vote_{u_{i+1}}(e_{i+1},f_{i+1})\ge 0$.
    \item \label{case3}If $e_i$ has a good copy for $v_i$ with positive value in $M'$ and $e_{i+1}$ has a bad copy for $v_{i+1}$ with positive value in $M'$, then $\vote_{v_i}(e_i,f_{i+1})+\vote_{u_{i+1}}(e_{i+1},f_{i+1})\ge -2$.
    \item\label{case4} If $e_0=\emptyset$, then $e_1$ has only a bad copy for $v_1$ with positive weight in $M'$ and $\vote_{u_1}(e_1,f_1)=+1$.
    \item \label{case5}If $e_{j_l}=\emptyset$, then $e_{j_l-1}$ has only a good copy for $v_{j_l-1}$ with positive weight in $M'$ and $\vote_{v_{j_l-1}}(e_{j_l-1},$ $f_{j_l})=+1$.
\end{enumerate}
\end{claim}
\begin{claimproof}
1. Suppose for the contrary that $v_i$ or $u_{i+1}$ prefers $f_{i+1}$ to $e_i$ or $e_{i+1}$ respectively. Note that both copies of $f_{i+1}$ are unsaturated in $M'$, because we have $M(e_i)>0$, $M(e_{i+1})>0$ and $f_{i+1}\notin \{ e_i,e_{i+1}\}$. 

By the assumption, both are matched with a bad copy with positive weight in $M'$. If $u_{i+1}$ prefers $f_{i+1}$, then the (unsaturated) bad copy of $f_{i+1}$ for $u_{i+1}$ (which is a good copy for $v_i$) blocks $M'$, otherwise the other copy blocks $M'$ by the definition of the extended rankings, contradiction.

2. In this case, at least one of $u_{i+1},v_i$ is matched with a bad copy with positive weight in $M'$, say it is $v_i$ (the other case is analogous). Hence, if both prefers $f_{i+1}$, then the bad copy of $f_{i+1}$ for $v_i$ (unsaturated in $M'$ for the same reasons) blocks $M'$, contradiction. 

3. Trivial.

4. If $e_0=\emptyset$, then we must have by axiom \ref{sens3} of sensible pairings that $v_0$ is unsaturated in $M'$, as $\phi_{v_0}(\emptyset , f_1)>0$. Hence, by the fact that the bad copy of $f_1$ for $v_0$ does not block $M'$, we get that $u_1$ is saturated with good copies in $M'$, so $e_1$ must have a bad copy with positive weight in $M'$ for $v_1$. Also, $u_1$ has to prefer each edge of $M'$ with positive weight even to the good copy of $f_1$, so $\vote_{u_1}(e_1,f_1)=+1$.

5. This case is very similar to \ref{case4}. We have that $u_{j_l}$ is unsaturated, so $v_{j_l-1}$ can only have good copies with positive weight in $M'$ and it must prefer $e_{j_l-1}$ to $f_{j_l}$, so $\vote_{v_{j_l-1}}(e_{j_l-1},f_{j_l})=+1$.
\end{claimproof}

We get by the above observations that for each $P^l$, case \ref{case1} happens at least as many times as case \ref{case3}, except when $e_0\ne \emptyset$, $e_{j_l}\ne \emptyset$ and $e_0\ne e_{j_l}$, in which case it can happen once more. However, then the two endpoints $u_0$ and $v_{j_l}$ vote with $+1$, i.e. $x(P^l)=+2$, so $\sum\limits_{j=1}^{j_l}(\vote_{v_{j-1}} (e_{j-1},f_j)+\vote_{u_j}(e_j,f_j)+x(P^l))\ge 0$ holds regardless.

Finally, if $e_0=e_{j_l}=\emptyset$, then case \ref{case1} happens one more time than case \ref{case3} by \ref{case4} and \ref{case5}, so we have that $\sum\limits_{j=1}^{j_l}(\vote_{v_{j-1}} (e_{j-1},f_j)+\vote_{u_j}(e_j,f_j)+x(P^l))> 0$. This proves that $M$ is extra-popular and that it has a maximum size even among the barely-defendable fractional-matchings. 
\end{proof}

\begin{remark}
    Observe that we did not use axiom~\ref{sens4} of sensible pairings anywhere in the proof of Theorem \ref{thm:maxpri}. Hence, the half-matching we get is extra-popular even in a slightly stronger sense we get by eliminating axiom~\ref{sens4} from the definition of sensible pairings. Also, it is largest among barely-defendable fractional-matchings even with the weaker definition which does not require axiom~\ref{sens4} from sensible pairings. 

    The reason for axiom~\ref{sens4} is that it is quite natural and symmetric to axiom~\ref{sens3}, and that we utilize it in the next section for the \popcrit\ algorithm. 
\end{remark}

\subsection{Popular Maximum Weight Fractional-Matching}\label{sec:popmax}

In this section, we consider the problem of finding a fractional-matching that is popular among the maximum weight fractional-matchings. 

%A fractional-matching $M$ is a \textit{maximum weight popular fractional-matching}, if $M$ is a maximum weight fractional-matching and $\Delta (M,N)\ge 0$ for any maximum weight fractional-matching $N$. 

Consider the LP for the maximum weight fractional-matching problem. 

\begin{equation*}
\begin{array}{rr@{}ll}
\text{maximize}  & \displaystyle\sum\limits_{e\in E} \w_{e}x_{e} &&\\
\text{subject to}& \displaystyle\sum\limits_{e\in E(v)}   x_{e} &\leq 1,  &v\in V\\
            & x_e& \ge 0 &  e\in E\\
\end{array}
\end{equation*}

The Dual LP is 

\begin{equation*}
\begin{array}{rr@{}ll}
\text{minimize}  & \displaystyle\sum\limits_{v\in V} y_{v}& &\\
\text{subject to}& y_u+y_v&\ge \w_e    & e\in E\\
 & y_v& \ge 0 &  v\in V\\
\end{array}
\end{equation*}

Let $y^*=(y^*_v)_{v\in V}$ be an optimal dual solution. 
By duality, we know that a fractional-matching $M$ given by $x=(x_e)_{e\in E}$ is optimal, if and only if we have (i) $y_u^*+y_v^*=\w(e)$ for each $e=uv$ with $x_e>0$ and (ii) for each $v$ with $y_v^*>0$ we have $\sum_{e\in E(v)}x(e)=1$. 

Let $E':=\{ e\in E\mid y_u^*+y_v^*=\w(e)\}$ and $\C := \{ v\in V\mid y_v^*>0\}$. Then, we have that the maximum weight fractional-matchings (so each possible candidate for a popular maximum weight fractional-matching) are exactly the ones that have positive value only on edges from $E'$ and saturate every $v\in \C$. Hence, we can restrict the set of edges of $G$ to $E'$, so from now on let $E:=E'$. 

This way, we reduced the problem to finding a \textit{popular critical fractional-matching 
}, which to recall is a fractional-matching $M$ that saturates every $v\in \C$ and $\Delta (M,N)\ge 0$ for any fractional-matching $N$ that saturates every vertex $v\in \C$. 

Therefore, now it is enough to solve \popcrit.

\paragraph{The fine-tuned first step.} We describe the fine-tuned first step of the meta-algorithm. Let $|\C| = s$. 
We keep each edge of $E$. These will be called the middle copies. If $e=(v_i,v_j)$ such that $v_i,v_j\notin \C$, we do not create additional parallel edges. If $v_i\in \C, v_j\notin \C$, we add copies $e^1,\dots, e^s$. These are considered worst copies for $v_i$ in the opposite order (i.e. $e^s$ is the absolute worst and $e^1$ is best among them) and best copies for $v_j$ in the opposite order (i.e. $e^s$ is the absolute best and $e^1$ is the worst among them). If $v_i\notin \C, v_j\in \C$, we add copies $e^{-1},\dots, e^{-s}$. These are considered worst copies for $v_j$ in the opposite order (i.e. $e^{-s}$ is the absolute worst and $e^{-1}$ is best among them) and best copies for $v_j$ in the opposite order (i.e. $e^{-s}$ is the absolute best and $e^{-1}$ is the worst among them). If $v_i,v_j\in \C$, then we do both of the above. 

Then, the preferences are created such that the best copies are best, then the middle copies and then the worst copies are worst. With respect to that, each vertex ranks the edges of the same type according to their original order. For a vertex $v$, we use the notation $B_j^v(e)$ to denote the $j$-th best copy of $e$ among the best copies, and $W_j^v(e)$ to denote the $j$-th worst copy of $e$ among the worst copies (both can be either $e^{j}$ or $e^{-j}$). We also use the notation $\lev_v(B_j^v(e))=j$, $\lev_v(e)=0$ and $\lev_v(W_j^v(e))=-j$, which is called the level of the given copy of $e$ for $v$.

\begin{theorem}\label{thm:popcrit}
    \popcrit\ is solvable in polynomial time. Furthermore, the algorithm outputs an extra-popular matching among the critical matchings.
\end{theorem}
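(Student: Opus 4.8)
\emph{Overall approach.} The plan is to run the meta-algorithm and then argue, as in the proof of Theorem~\ref{thm:maxpri}, that the projection $M$ of the stable half-matching $M'$ obtained in step~2 is (a)~critical and (b)~not beaten by any critical fractional-matching under any sensible pairing. Polynomiality is then immediate: the instance $I'$ has $O(s\,|E|)=O(|V|\,|E|)$ parallel edges, building the strict preferences and running the multigraph extension of Tan's algorithm on $I'$ takes polynomial time, and the projection is trivial; by~(a) the output is a legal critical fractional-matching, and by~(b), since every feasible pairing is sensible, it is popular among the critical fractional-matchings, in fact extra-popular among them.

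\emph{Criticality of $M$.} Suppose some $w_0\in\C$ is unsaturated by $M$, hence by $M'$. I would derive a contradiction by following a chain of critical vertices. Fix an edge $(w_0,w_1)\in E$ and look at its copy that is absolute worst for $w_0$ (it exists since $w_0\in\C$); for $w_1$ that very copy is a best copy of that edge. As $w_0$ is unsaturated it would accept this copy, so stability of $M'$ forces $w_1$ to hold in $M'$ a copy at least as good, hence a best copy of some edge $(w_1,w_2)$; but $w_1$ has best copies on an edge only when the other endpoint is critical, so $w_2\in\C$, and for the critical endpoint $w_2$ this is a worst copy, so $w_2$ is matched in $M'$ to one of its own worst copies. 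Since the copies of every edge are ranked in opposite orders at its two endpoints, iterating the argument from $w_2$ --- now starting from a vertex pinned to a worst copy whose level is strictly lower --- yields an arbitrarily long strictly descending sequence of worst-copy levels, which is impossible because each vertex has only $s=|\C|$ worst levels. Hence every critical vertex is saturated and $M$ is critical.

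\emph{Extra-popularity among critical matchings.} Let $N$ be any critical fractional-matching and $\phi$ a sensible pairing between $M$ and $N$. Word for word as in Theorem~\ref{thm:maxpri}, I would first strip the diagonal part of $\phi$ using axiom~5, then iteratively decompose the remainder into walks $P^l=(e_0,f_1,e_1,\dots,e_{j_l})$ of $G$ with $M(e_i)>0$, $N(f_i)>0$, $\phi$ positive on consecutive pairs, and $e_0,e_{j_l}\in E\cup\{\emptyset\}$, obtaining $\Delta(M,N,\phi)=\sum_l w_l\sum_{j=1}^{j_l}\big(\vote_{v_{j-1}}(e_{j-1},f_j)+\vote_{u_j}(e_j,f_j)+x(P^l)\big)$, with $x(P^l)$ the usual contribution of the $\emptyset$-ends; this decomposition is unchanged since it uses only the sensible-pairing axioms and half-integrality. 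It then suffices to bound each walk's inner sum below by $0$. For this I would prove the level analogues of the Claim in Theorem~\ref{thm:maxpri}, where the dichotomy of good versus bad copies is replaced by the level numbers $\lev_v(\cdot)$: writing $c_i$ for a copy of $e_i$ with $M'(c_i)>0$ incident to the shared vertex $v_i$ of $e_i$ and $f_{i+1}$, (i) if $c_i$ is a worst copy at $v_i$ and $e_{i+1}$ is matched through a non-worst copy at $v_{i+1}$, then $\vote_{v_i}(e_i,f_{i+1})+\vote_{u_{i+1}}(e_{i+1},f_{i+1})=+2$, since otherwise a copy of $f_{i+1}$ --- unsaturated in $M'$ because $f_{i+1}\notin\{e_i,e_{i+1}\}$ --- would block; (ii) in general this two-vote sum is $+2$, $0$ or $-2$ according to how the levels of the copies at $v_i$ and $v_{i+1}$ compare, and every $-2$ can be charged against a later $+2$ along the walk, again using the opposite ordering of each edge's copies at its two ends; and (iii) at an $\emptyset$-end the endpoint is not in $\C$ --- by axiom~3 and the criticality of $M$ when $\emptyset$ is paired with an $N$-edge, and by axiom~4 and the criticality of $N$ when $\emptyset$ is paired with an $M$-edge --- so that endpoint has no worst copies, sits on a level-$\ge 0$ copy, and casts a $+1$ vote. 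Summing over $P^l$ makes the inner sum $\ge 0$ (strictly positive when both ends are $\emptyset$, which as in Theorem~\ref{thm:maxpri} would additionally certify that $|M|$ is largest among the barely-defendable critical matchings), hence $\Delta(M,N,\phi)\ge 0$.

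\emph{Main obstacle.} The delicate step is criticality: one must verify that the level structure of the reduction genuinely forces the stable half-matching in $I'$ to saturate every vertex of $\C$, i.e.\ that the chain of critical vertices it generates cannot close up within the $s$ available worst-copy levels, and that this level choice is consistent with the accounting in part~(b) --- in particular that every $-2$ contribution on a walk is always paid for. Both are structurally the same as in the bipartite proof of Theorem~\ref{thm:maxpri}; keeping the level bookkeeping coherent between the criticality argument and the popularity argument is where the real work lies.
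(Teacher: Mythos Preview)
Your extra-popularity argument tracks the paper's: decompose the sensible pairing into walks and run a level-based case analysis (the paper's Claim~\ref{claim:popmax-cases}), using axioms~3 and~4 together with the criticality of $M$ and of $N$ to handle the $\emptyset$-ends exactly as the paper does in its cases 4--7. Modulo notational slips (you write $v_{i+1}$ where the relevant vertex is $u_{i+1}$, and your case~(i) is coarser than the paper's precise $q=p+1$ case), this part is essentially the paper's proof.

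The criticality argument, however, has a real gap. After one step you reach a critical $w_2$ that is matched in $M'$ via a worst copy of $(w_1,w_2)$, and you propose to ``iterate from $w_2$'' along a strictly descending level sequence. But you never say which edge $(w_2,w_3)$ to follow next, and stability alone gives you nothing: $w_2$ is saturated, and any copy of a fresh edge whose level for $w_2$ is strictly \emph{below} that of $w_2$'s current $M'$-copy is \emph{not} preferred by $w_2$, so no blocking argument fires. If $(w_1,w_2)$ is $w_2$'s only incident edge your chain simply stops after one step. Worse, since you begin at $w_0$ with the absolute worst copy, $w_2$ already sits at the minimum level $-s$, so there is nowhere left to descend.

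The paper supplies the missing ingredient by fixing a reference critical half-matching $N$ and passing to the bipartite double cover $G''$ (as in Theorem~\ref{thm:maxsrti}): the $M''$--$N''$ alternating path emanating from the unsaturated critical vertex is precisely what tells you, at each critical vertex along the way, which edge to follow next. Stability of $M'$ then forces the levels of the $M'$-copies to change monotonically along this path, and after $s$ steps one has encountered $s+1$ vertices of $V'$ corresponding to critical vertices, contradicting $|\C|=s$. You cannot dispense with $N$ here --- it is exactly what lets the chain continue past the first $M$-edge.
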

\begin{proof}
    Take the output $M$ of the algorithm.

\begin{claim}
    $M$ saturates every vertex in $\C$. 
\end{claim}
\begin{claimproof}
    Suppose for the contrary that $M$ does not saturate some vertex $v\in \C$. Take a half-matching $N$ that does.  

    Construct a bipartite graph $G''=(V,V',E'')$ as in the proof of Theorem \ref{thm:maxsrti} and add the edges according to the half-matchings $N$ and $M$ the same way, which leads to matchings  $N''$ and $M''$. By the assumption on $N$, we know that there must be an $M''-N''$ alternating path that starts from some vertex $v_{i_1}$ or $v_{i_1}'$ with $v_{i_1}\in \C$ and contains more copies of vertices from $\C$. Suppose by symmetry it is $v_{i_1}'$. Take the edge $e_1''=(v_{i_1}',v_{i_2})\in N''$ (corresponding to the edge $e_1\in E$). Since not even the worst copy for $v_{i_1}$ of $e_1$ blocks $M'$, we get that $v_{i_2}$ is saturated in $M'$ with only $B_1^{v_{i_2}}$-type edges. Hence, we have some edge $f_1''=(v_{i_2},v_{i_3}')\in M''$ such that $v_{i_3}\in \C$, as $B_1^{v_{i_2}}(f_1)$ exist. By our assumption on the component, there is some edge $e_2''=(v_{i_3}',v_{i_4})\in N''$. For $v_{i_3}$, we have that $M'(W_1^{v_{i_3}}(f_1))>0$, so even $W_2^{v_{i_3}}(e_2)$ can only be dominated at $v_{i_4}$ in $M'$. Hence, $v_{i_4}$ is saturated in $M'$ with $B_j^{v_{i_4}}$ copies with $j\le 2$. Hence, we have an edge $f_2''=(v_{i_4},v_{i_5}')\in M''$ and that $v_{i_5}\in \C$. By the assumption on the component, there exists an edge $e_3''=(v_{i_5}',v_{i_6})\in N''$. By iterating this argument, we get edges $f_3'', e_4'', \dots, e_s'',f_s''$, such that each of them contains a vertex in $V'$ that corresponds to a critical vertex in $\C$. However, there are $s+1$ such vertices in $(e_1\cup f_1\cup \cdots \cup e_s\cup f_s)\cap V'$, a contradiction to $|\C |=s$. 
\end{claimproof}   

\begin{claim}
    $M$ is an extra-popular critical fractional-matching. 
\end{claim}
\begin{claimproof}
Let $N$ be an arbitrary critical fractional-matching and let $\phi$ be a sensible pairing between $M$ and $N$. 

We decompose $\phi$ to get walks $P^1,\dots, P^k$ the same way as in Theorem \ref{thm:maxpri}. Let $P^l=(e_0,f_1,e_1,$ $f_2,e_2\dots, e_{j_l})$, where $M(e_i)>0$, $N(f_i)>0$ and $\phi_{v_{i-1}}(e_{i-1},f_i)>0,\phi_{u_i}(e_i,f_i)>0$ for each $i$ and $e_0=e_{j_l}$, if $P^l$ is a closed alternating walk (i.e. if it is a union of alternating cycles), and otherwise $e_0=\emptyset$ or $e_{j_l}=\emptyset$ is possible. Here, $e_i=(u_i,v_i)$ and $f_i=(v_{i-1},u_i)$.

Then, we have that $\Delta (M,N,\phi ) =\sum\limits_{l=1}^kw_l\cdot \sum\limits_{j=1}^{j_l}(\vote_{v_{j-1}} (e_{j-1},f_j)+\vote_{u_j}(e_j,f_j)+x(P^l))$, where $x(P^l)=\vote_{u_0}(e_0,\emptyset)+\vote_{v_{j_l}}(e_{j_l},\emptyset)$ if $e_0\ne e_{j_l}$ and 0 otherwise.. Therefore, it is enough to show that for each walk $P^l$, $\sum\limits_{j=1}^{j_l}(\vote_{v_{j-1}} (e_{j-1},f_j)+\vote_{u_j}(e_j,f_j)+x(P^l))\ge 0$.

It may happen that $v_i=v_j$ or $u_i=v_j$ for some $i,j$, however, this does not matter for the proof. Similarly as in Theorem \ref{thm:maxpri}, we have that for any $i$, $f_{i}\notin \{ e_{i-1},e_i\}$.% Note that depending on the above cases, we can have $u_0=\emptyset$, $v_{j_l}=\emptyset$ or $u_0=u_{j_l}, v_0=v_{j_l}$.

\begin{claim}\label{claim:popmax-cases}
\begin{enumerate}
    \item\label{c1} If $e_i$ has a level $p$ copy for $v_i$ in $M'$ with positive value and $e_{i+1}$ has a level $q$ copy for $v_{i+1}$ with positive value in $M'$, then $q\le p+1$. 
    \item\label{c2} If $e_i$ has a level $p$ copy for $v_i$ in $M'$ with positive value and $e_{i+1}$ has a level $q$ copy for $v_{i+1}$ with positive value in $M'$ such that $q= p+1$, then $\vote_{v_i}(e_i,f_{i+1})+\vote_{u_{i+1}}(e_{i+1},f_{i+1})= +2$.
    \item\label{c3} If $e_i$ has a level $p$ copy for $v_i$ in $M'$ with positive value and $e_{i+1}$ has a level $q$ copy for $v_{i+1}$ with positive value in $M'$ such that $p=q$, then $\vote_{v_i}(e_i,f_{i+1})+\vote_{u_{i+1}}(e_{i+1},f_{i+1})\ge 0$.
    \item\label{c4} If $e_0=\emptyset$, then each copy of $e_1$ with positive value in $M'$ has level at most $0$ for $v_1$ and if there is one with level exactly $0$, then
    $\vote_{u_1}(e_1,f_1)=+1$.
    \item\label{c5} If $e_0\ne \emptyset$ and $e_0\ne e_{j_l}$, then each copy of $e_0$ with positive value in $M'$ has level at least $0$ for $v_0$.
    \item \label{c6}If $e_{j_l}=\emptyset$, then each copy of $e_{j_l-1}$ with positive value in $M'$ has level at least $0$ for $v_{j_l-1}$ and if there is one with level exactly $0$, then $\vote_{v_{j_l-1}}(e_{j_l-1},f_{j_l})=+1$.
    \item \label{c7}If $e_{j_l}\ne \emptyset$ and $e_{j_l}\ne e_0$, then each copy of $e_{j_l}$ with positive value in $M'$ has level at least $0$ for $v_{j_l}$.
\end{enumerate}
\end{claim}
\begin{claimproof}
First of all, observe that for any edge $f_i$, it is not saturated in $M$, because at least one of its endpoints have a different incident edge with positive value (and it cannot happen that $P_l=(e_0,f_1,e_1)$ with $e_0=e_1=\emptyset$ by the stability of $M'$). Hence, none of $f_i$'s copies are saturated in $M'$.

    1. Suppose that $q>p+1$. Then, the copy of $f_{i+1}$ that has level $p+1$ for $v_i$ blocks $M'$, because this copy is level $-(p+1)>-q$ for $u_{i+1}$, so both of them strictly prefer it to some copies of $e_i$ and $e_{i+1}$ with positive value in $M'$, contradiction.

    2. In this case, $u_{i+1}$ has a level $-(p+1)$ copy of $f_{i+1}$ with positive value in $M'$. Hence, if either $v_i$ or $u_{i+1}$ prefer $f_{i+1}$ to $e_i$ or $e_{i+1}$ respectively, then either the level $p$ or the level $p+1$ copy of $f_{i+1}$ for $v_i$ blocks $M'$, contradiction. 

    3. Suppose both $v_i$ and $u_{i+1}$ prefer $f_{i+1}$ to $e_i$ and $e_{i+1}$ respectively. Then, by the assumption, the level $p$ copy of $f_{i+1}$ for $v_i$ blocks $M'$, contradiction. 

     4. Suppose $e_0=\emptyset$. Then, $\phi (f_1,\emptyset )>0$, so $v_0$ is unsaturated in $M'$ by axiom~\ref{sens3} on sensible pairings. By the fact that the level $0$ copy (which must exist) of $f_1$ for $v_0$ does not block $M'$, we get that $u_1$ is saturated in $M'$ with copies that have level at least $0$. Hence, $e_1$ has level at most $0$ for $v_1$ in any copy that has positive value in $M'$. Also, if the level $0$ copy of $e_1$ has positive value in $M'$, then we must have that $e_1\succ_{u_1}f_1$, so $\vote_{u_1}(e_1,f_1)=+1$. 
    
    5. Suppose $e_0\ne \emptyset$ and $e_0\ne e_{j_l}$. This implies that $\phi_{u_0}(e_0 , \emptyset)>0$. By axiom~\ref{sens4} on sensible pairings, this implies that $\sum_{e\in E(u_0)}N(e)<1$, so $u_0\notin \C$, as both $M$ and $N$ saturate critical vertices. Hence, $u_0$ has only copies with level at least $0$ in $G'$, so $e_0$ has only copies with level at most $0$ for $v_0$ in $G'$ and thus also in $M'$.

    6. Suppose that $e_{j_l}=\emptyset$. Then, $u_{j_l}$ is unsaturated by $M'$ by axiom~\ref{sens3} of feasible pairings. By the fact that the level $0$ copy (which must exist) of $f_{j_l}$ for $u_{j_l}$ does not block $M'$, we get that $v_{j_l-1}$ is saturated in $M'$ with copies that have level at least $0$. Hence, $e_{j_l-1}$ has level at least $0$ for $v_{j_l-1}$ in any copy that has positive value in $M'$. Also, if the level $0$ copy of $e_{j_l-1}$ has positive value in $M'$, then we must have that $e_{j_l-1}\succ_{v_{j_l-1}}f_{j_l}$, so $\vote_{v_{j_l-1}}(e_{j_l-1},f_{j_l})=+1$. 

    7. Similarly as in case \ref{c5}, we get that $v_{j_l}\notin \C$. Therefore, any copy of $e_{j_l}$ in $G'$ have level at least $0$ for $v_{j_l}$, thus this is also true for the ones with positive value in $M'$. 
\end{claimproof}

From Claim \ref{claim:popmax-cases} cases \ref{c4} -\ref{c7} it follows that the occurrences of case \ref{c2} with $q=p+1$ are at least as many as the occurrences with $q<p$. Furthermore, we also have that $\vote_{v_0}(e_0,f_1)+\vote_{u_1}(e_1,f_1)\ge 0$ and $\vote_{u_{j_l}}(e_{j_l},f_{j_l})+\vote_{v_{j_l-1}}(e_{j_l-1},f_{j_l})\ge 0$, if $e_0=\emptyset$ or $e_{j_l}=\emptyset$ by cases \ref{c4} and \ref{c6}, so we can conclude that
$\sum\limits_{j=1}^{j_l}(\vote_{v_{j-1}} (e_{j-1},f_j)+\vote_{u_j}(e_j,f_j)+x(P^l))\ge 0$ (by case~\ref{c2} and $\vote_{v_{j-1}} (e_{j-1},f_j)+\vote_{u_j}(e_j,f_j)\ge -2$ even if $q<p$).
\end{claimproof}

Therefore, we have that $M$ is a popular maximum weight fractional-matching.
\end{proof}

As we observed before, we get the following corollary from Theorem \ref{thm:popcrit}
\begin{corollary}
    \popmax\ is solvable in polynomial time. 
\end{corollary}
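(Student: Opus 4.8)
The final statement is the Corollary: \popmax{} is solvable in polynomial time. The excerpt has already done essentially all the work — I need to spell out the reduction from \popmax{} to \popcrit{}.

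\medskip

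The plan is to observe that \popmax{} reduces to \popcrit{} via LP duality, and that this reduction is polynomial-time, so the result follows immediately from Theorem \ref{thm:popcrit}.

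First I would recall the setup already laid out before Theorem \ref{thm:popcrit}: given the instance $(G,(\succ_v),\w)$, solve the maximum weight fractional-matching LP and its dual (both are polynomial-sized linear programs, solvable in polynomial time, and in fact the primal polytope is half-integral by Balinski). Fix an optimal dual solution $y^*$. By complementary slackness, a fractional-matching $M$ with incidence vector $x$ is a maximum weight fractional-matching if and only if (i) $x_e>0$ implies $y^*_u+y^*_v=\w(e)$ for $e=uv$, and (ii) $y^*_v>0$ implies $\sum_{e\in E(v)}x_e=1$. Thus, setting $E' := \{e=uv\in E\mid y^*_u+y^*_v=\w(e)\}$ and $\C := \{v\in V\mid y^*_v>0\}$, the set of maximum weight fractional-matchings of $(G,\w)$ is exactly the set of fractional-matchings of the subgraph $G'=(V;E')$ that saturate every vertex of $\C$ — that is, the critical fractional-matchings of the instance $(G', (\succ_v|_{E'}), \C)$.

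Next I would argue that popularity is preserved by the reduction. Since the candidate set (maximum weight fractional-matchings of $(G,\w)$) coincides exactly with the candidate set (critical fractional-matchings of $(G',\C)$), and since $\Delta(M,N,\phi)$ depends only on the preference orders restricted to edges with positive value, a fractional-matching $M$ is a popular maximum weight fractional-matching of $(G,\w)$ if and only if it is a popular critical fractional-matching of $(G',\C)$. Then I invoke Theorem \ref{thm:popcrit}: \popcrit{} on $(G',(\succ_v|_{E'}),\C)$ is solvable in polynomial time and its output $M$ is even extra-popular among the critical fractional-matchings. Pulling $M$ back (it already lives on $E'\subseteq E$) gives a popular maximum weight fractional-matching of the original instance. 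Since each step — solving the two LPs, forming $E'$ and $\C$, running the meta-algorithm of Section \ref{sec:alg} on the reduced instance — takes polynomial time, the whole procedure is polynomial.

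The only mild subtlety, and the part I would be most careful about, is the claim that the maximum weight fractional-matchings are \emph{precisely} the critical fractional-matchings on $E'$: one direction (every max-weight matching satisfies the complementary slackness conditions) is standard, but for the converse one must check that \emph{every} fractional-matching supported on $E'$ that saturates $\C$ attains the optimal weight — this follows because for such $M$, $\sum_e \w_e x_e = \sum_e (y^*_u+y^*_v)x_e = \sum_v y^*_v \sum_{e\in E(v)} x_e = \sum_{v\in\C} y^*_v \cdot 1 = \sum_v y^*_v$, which equals the optimal primal value by strong duality. With that identity in hand the corollary is immediate.
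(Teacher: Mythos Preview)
Your proposal is correct and follows exactly the approach of the paper: the reduction via LP duality and complementary slackness to a \popcrit\ instance on the tight edges with critical set $\C=\{v:y^*_v>0\}$, then invoking Theorem~\ref{thm:popcrit}. Your added verification that every critical fractional-matching on $E'$ actually attains the optimal weight (via the identity $\sum_e \w_e x_e=\sum_v y^*_v$) is a nice explicit check of a step the paper leaves implicit.
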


\section{Conclusions}\label{sec:conc}
In this paper we have shown that many algorithms that rely on leveling the agents, or duplicating the edges can be extended even to roommates instances. 
We extended the $3/2$-approximation algorithm for maximum-size weakly stable matchings from bipartite graphs to arbitrary instances. This resolved a long-standing open question and has important applications, particularly in allocating junior doctors to hospitals in the presence of couples among the doctors.

Our central observation may open even more possibilities to utilize this technique even further, so it would be interesting to see other frameworks, where it helps to provide desirable solutions.

%\section{Acknowledgements}
%The author thanks Tamás Király, Kavitha Telikepalli and Yu Yokoi for fruitful discussions about popular fractional-matchings.
%The project was supported by the Hungarian Scientific Research Fund, OTKA, Grant No. K143858, by the Momentum Grant of the Hungarian Academy of Sciences, grant number 2021-2/2021 and by the Ministry of Culture and Innovation of Hungary from the National Research, Development and Innovation fund, financed under the KDP-2023 funding scheme (grant number C2258525).

\bibliographystyle{abbrv}
\bibliography{cit}

\end{document}